\newtheorem{definition}{Definition}
\newtheorem{theorem}{Theorem}
\newtheorem{proposition}{Proposition}
\newtheorem{corollary}{Corollary}
\title{A User Centric Group Authentication Scheme for Secure Communication}
\author[1,*]{Oylum Gerenli}
\author[2]{Gunes~Karabulut-Kurt}
\author[1]{Enver Ozdemir}
\affil[1]{Istanbul Technical University, Informatics Institute, Istanbul, Turkey}
\affil[2]{Polytechnique Montr\'eal, Department of Electrical Engineering, Montr\'eal, Canada}
\affil[*]{gerenli23@itu.edu.tr}
\begin{abstract}
Group Authentication Schemes (GAS) are methodologies developed to verify the membership of multiple users simultaneously. These schemes enable the concurrent authentication of several users while eliminating the need for a certification authority. Numerous GAS methods have been explored in the literature, and they can be classified into three distinct generations based on their foundational mathematical principles.  First-generation  GASs rely on polynomial interpolation and the multiplicative subgroup of a finite field. Second-generation GASs also employ polynomial interpolation, but they distinguish themselves by incorporating elliptic curves over finite fields. While third-generation GASs present a promising solution for scalable environments, they demonstrate a limitation in certain applications. Such applications typically require the identification of users participating in the authentication process. In the third-generation GAS, users are able to verify their credentials while maintaining anonymity. However, there are various applications where the identification of participating users is necessary. In this study, we propose an improved version of third-generation GAS, utilizing inner product spaces and polynomial interpolation to resolve this limitation.  We address the issue of preventing malicious actions by legitimate group members. The current third-generation scheme allows members to share group credentials, which can jeopardize group confidentiality. Our proposed scheme mitigates this risk by eliminating the ability of individual users to distribute credentials. However, a potential limitation of our scheme is its reliance on a central authority for authentication in certain scenarios. 
\end{abstract}
\begin{document}

\flushbottom
\maketitle
%
%
\thispagestyle{empty}


\section*{Introduction}

The Internet of Things (IoT) is a network that utilizes the internet to interconnect physical and virtual devices, enabling intelligent decision-making. IoT has numerous potential for real-time applications, such as environmental monitoring, healthcare service, inventory and production management, food supply chain (FSC), transportation, workplace, and home support \cite{da2014internet}. However, IoT security has become a pressing concern due to the rapid expansion of IoT systems, which requires the protection of both networks and hardware components \cite{frustaci2017evaluating}. Among the various security aspects, authentication is particularly vital for the efficient operation of digital communication networks, as it ensures that the devices participating in the network are legitimate. Standard authentication and key establishment mechanisms may not be suitable for devices communicating over wireless mediums, particularly  IoT devices. Due to the resource constraints of IoT devices, implementing public key algorithm-based methods \cite{chaudhary2018lattice} for authentication is not ideal, as such methods involve operations with large numbers. Additionally, current and next-generation wireless communication systems (5G and beyond) \cite{chettri2019comprehensive,nguyen20216g} will connect a vast array of devices with diverse architectures. Furthermore, the transition of infrastructure from ground to space for internet service providers (ISP) necessitates the integration of non-traditional security enhancements. Assuming that all elements of the wireless systems are mobile, frequent authentication and key establishment are required for a large number of devices \cite{aman2020privacy}. It is evident that the current one-by-one methods may not provide a desirable solution for ensuring secure communication among devices in such systems.

Next-generation communication systems demand authentication methods that are scalable, lightweight, secure, and user-friendly. With numerous devices exchanging data simultaneously, hundreds may require authentication at once. Group authentication schemes, which enable the simultaneous authentication of multiple users while establishing a shared secret key with minimal communication and computational overhead, emerge as a practical solution to this challenge. Although several group authentication schemes have been proposed, a reliable implementation has yet to emerge. This is due to several reasons, some of which are outlined below:
\begin{itemize}
\item Group authentication schemes based on polynomial interpolation are susceptible to denial-of-service (DoS) attacks. Specifically, these schemes fail to detect intruders attempting to join the process. Unfortunately, an unauthorized entry results in the termination of the authentication process. 
\item The most recent promising group authentication scheme (GAS) \cite{Guzey}, which employs inner product spaces, lacks the capability to identify the participants involved in the process.
\item Each member of the group must have the ability to identify malicious actors. 
\end{itemize}

The proposed scheme defines processes for key generation, group key generation, group authentication, the inclusion of new members by any existing member, and the identification of malicious actors. For group key generation, users compute a shared key by performing projection operations using their private information combined with public data, which is broadcast by a manager or any participant prior to the authentication process. Interestingly, all participants arrive at the same result despite utilizing their unique secrets. This approach guarantees private and autonomous key establishment, eliminating the necessity for direct data exchange.

The group manager (GM) generates keys by defining a subspace $W$ within a universal space $E$ and constructing a random polynomial $f(x)$. Each group member is subsequently assigned a unique private key, which is derived from the polynomial and the selected basis of the subspace $W$. User authentication is accomplished through the member's secret, comprising a basis and specific details about the function. Additionally, the new member addition feature allows an existing member to securely add another user in the absence of the $GM$. While the scheme effectively prevents non-members from accessing the communication channel, it also incorporates a mechanism for detecting malicious members. This is achieved using a recursive subgroup division algorithm to isolate and identify adversarial members within the group.

The proposed method recommends performing group authentication using publicly shared information. This approach removes the necessity for members to exchange private information, thereby minimizing security risks and reducing communication overhead. Furthermore, it facilitates the identification of malicious actors and prevents unauthorized participation in the authentication or key establishment processes. Most importantly, this method enables group members to verify one another, ensuring that only authorized individuals can engage in the group’s communication. In summary, this proposed method offers the following contributions:

\begin{itemize}
    \item A method for group membership confirmation and group key establishment, where the process cost is independent of the group size.
    \item A method for authenticating members participating in the process.
    \item A method for detecting and identifying malicious members during the authentication process.
    \item A practical solution suitable for scalable environments.
\end{itemize}

The structure of this paper is organized as follows: Section II reviews related work and provides their evaluation. Section III covers the preliminaries. Section IV introduces the proposed scheme. Section V presents the security analysis and discussion. Section VI compares the performance of the proposed scheme with other group authentication methods. Finally, Section VII concludes with the contributions to the field and outlines future research directions.

\section*{Related Work}

The first-generation and the second-generation group authentication schemes are founded on polynomial interpolation over finite fields. This concept is inspired by secret-sharing schemes; therefore, we begin by presenting some fundamental information about secret-sharing schemes. The first practical methods for sharing a secret among some users are presented by Shamir \cite{Shamir} and Blakley \cite{Blakley}. Their approach ensures that when $t$ or more participants collaborate, they can reconstruct the secret using their respective shares. However, if fewer than $t$ participants are involved, no information about the secret can be obtained. In Shamir's proposal, the secret is divided into several pairs and distributed among the shareholders. Only those who possess a number of shares meeting or exceeding the threshold value can successfully retrieve the secret.  Blakley's proposal addresses the secret-sharing problem through the use of hyperplane geometry. Each of the $n$ participants is assigned a hyperplane equation within a $t$-dimensional space over a finite field. Each hyperplane occasionally intersects a specific point, and the intersection of these hyperplanes collectively represents the secret. To reconstruct the secret, participants must solve the corresponding system of equations. Unlike Shamir's secret-sharing scheme, Blakley's approach may not be practical for certain use cases due to efficiency constraints \cite{Menezes}.

Chaum and Van Heyst \cite{chaum1991group} explore four distinct techniques for group signatures. Their findings are highly significant in the field of group authentication, which is why we have chosen them to be presented in this work.

\begin{itemize}
\item  In the first group signature scheme, a trusted authority selects a public key system and assigns a unique private key to each group member. The authority then makes a publicly accessible list of corresponding public keys. When a group member signs a message using their private key, the recipient verifies the signature by checking it against the public key list.

\item The second group signature scheme, which is based on an RSA factorization problem, allows group members to sign anonymously and enables the verification of these signatures. The scheme preserves the anonymity of the signer but also provides an option to reveal their identity when necessary. In this approach, a trusted authority selects two large prime numbers, $p$ and $q$, and computes $N = p \times q$. The modulus $N$ is then made public. A function $f$ is chosen such that its output is coprime with $N$. Each group member receives a secret key $s_i$, which is a large random prime. The authority computes $v = \prod s_i$ and publishes $N, v,$ and $f$. To generate a signature, a member transforms the message $m$ using $f$ and computes $S = (f(m))^{s_i} \mod N$. A zero-knowledge proof is then provided to verify the signature's validity without revealing the identity of the signer.

\item The third group signature scheme is similar to the second one but introduces a "Trusted Public Directory" containing each member's RSA modulus, $N_i = p_i \times q_i$. Each member's secret key consists of their factors of their modulus, $p_i$ and $q_i$. During the setup phase, a trusted authority generates an independent RSA modulus $N = p \times q$. To sign a message, a member randomly selects a group of participants (including themselves) and constructs the signature using their private key’s prime factor $p_i$, $S = (f(m))^{p_i} \mod N$. The member then provides a zero-knowledge proof to show that $p_i$ is a divisor of the product of the selected members' moduli without revealing $p_i$. Verification is also carried out using zero-knowledge proofs to ensure anonymity.

\item The fourth group signature scheme is based on a large prime $p$ and modular arithmetic, enabling group members to sign messages anonymously. Each member has a secret key $s_i$ along with a corresponding public key $k_i = g^{s_i} \mod p$, where $g$ and $h$ are public generators. To sign a message $m$, a member selects a random subset of members. The signature is then computed using the member’s secret key as $S = m^{s_i} \mod p$. The signer subsequently provides a zero-knowledge proof to demonstrate that the signature corresponds to a valid public key while maintaining their anonymity.

\end{itemize}

Harn's work \cite{Harn} illustrates an efficient application of Shamir’s secret-sharing scheme within the context of group authentication. As opposed to conventional authentication schemes, which require one-to-one authentication between each node and incur high communication costs, this approach significantly reduces communication overhead. The study that focuses on groups is specifically intended for group authentication. Based on Shamir's $(t, n)$ secret sharing scheme (SSS), it proposes a basic $t$-secure $m$-user $n$-group authentication scheme $(t, m, n)$, where $t$ is the proposed scheme's threshold, $m$ is the number of users who participated in the group authentication, and $n$ is the total number of group members. Harn introduced three types of group authentication schemes, and various studies in the literature have analyzed the security of each scheme. For instance, two of the schemes are vulnerable to an attack similar to a replay attack. Specifically, if the scheme allows multiple authentication attempts, an attacker could repeatedly execute the process to extract system secrets and users' private tokens. Conversely, if each secret is restricted to a single authentication attempt, regardless of its success or failure, the system becomes susceptible to DoS attacks. In this scenario, a malicious entity could inject a false value, disrupting both the authentication process and the overall group authentication mechanism.

Harn's work has served as a foundation for subsequent studies, including \cite{Li} and \cite{Mahalle}. These studies build upon Harn’s approach, introducing alternative methods. For instance, in \cite{Li}, Li \textit{et al.} applied Harn’s scheme in conjunction with ECC-pairing to implement group authentication and key agreement within the LTE network. Later, Mahalle \textit{et al.} proposed a solution in \cite{Mahalle} leveraging Paillier threshold cryptography as a core tool. Both studies also include a comparative analysis of their performance relative to Harn’s work.

A physically unclonable function (PUF) is defined as a representation of a unique and unclonable characteristic inherent to a physical object. An ideal PUF functions as a one-way mechanism, where its output consistently depends on the physical system. It is easy to evaluate and construct, operates like a random function with unpredictable outputs, and is inherently unclonable \cite{gope2019lightweight}. PUFs are extensively utilized in various group authentication studies, such as \cite{Ren}. In \cite{Ren}, the protocol comprises two key steps: the registration phase and the mutual authentication and key agreement phase. To achieve mutual authentication and key agreement, the protocol employs a group authentication and data transmission technique for NB-IoT, using the PUF’s output as a shared root key. Additionally, PUFs have been applied in conjunction with the Chinese Remainder Theorem (CRT). Studies, including \cite{Singh}, have explored the integration of CRT with PUF. For instance, the study in \cite{Singh} introduces a lightweight key distribution and group authentication scheme that combines CRT, factorial trees, and PUF. An enhanced version of this work is further presented in \cite{Singh}.

Blockchain technology has seen widespread adoption across various fields. In \cite{Zhang}, if a new block receives a valid group aggregate signature from the group to which the block author belongs, it will be considered legitimate. Furthermore, the study \cite{zhang2019group} provides a detailed explanation of the authentication and key exchange processes when mobile devices join or leave blockchain-based mobile edge computing (BMEC) and a new block will be considered valid only if it obtains a valid aggregate group signature from the group to which the block creator belongs.

The works of \cite{Harn}, \cite{aydin}, and \cite{Guzey} have significant contributions to this field, with each being considered as defining distinct generations of group authentication. Consequently, their studies are detailed below.
\subsection*{First Generation Group Authentication Scheme}

One of Harn's group authentication methods is a synchronous $(t, m, n)$ scheme, where all participants must reveal their secret tokens at the same time. If they fail to do so, an unauthorized participant might forge valid tokens by leveraging the tokens already revealed by others. The other two methods include the asynchronous $(t, m, n)$ group authentication scheme and the asynchronous $(t, m, n)$ scheme with multiple authentication attempts. All of these schemes consist of two main phases: token generation and group authentication.

\subsubsection*{Basic (Synchronous) (t, m, n) group authentication:}

This fundamental scheme is intended for synchronous communication environments, where all members are required to engage and respond within a specified time window.

\subsubsection*{Asynchronous (t; m; n) group authentication:}

In this method, a $(t, m, n)$ group authentication scheme is proposed, enabling $m$ users (where $t \leq m \leq n$) to asynchronously release their values during a group authentication process.

\subsubsection*{Asynchronous (t; m; n) group authentications scheme for multiple authentication scheme:}

The $(t, m, n)$ asynchronous scheme with multiple authentications enables tokens to be reused for multiple authentications.

\subsection*{Second Generation Group Authentication Scheme}

In the study \cite{aydin}, the elliptic curve discrete logarithm problem (ECDLP) forms the basis of the proposed group authentication algorithm. The algorithm operates in two distinct stages: Initialization and Confirmation, each of which is described in detail below.

\textit{The Initialization Phase}

 \begin{itemize}
 \item  The GM selects a cyclic group $G$ along with a generator $P$ for $G$. Additionally, the GM determines the encryption algorithm $E = Encryption(\cdot)$, the decryption algorithm $D = Decryption(\cdot)$, and a hashing function $H(\cdot)$. A polynomial of degree $t - 1$ is also chosen by the GM, with the constant term defined as the group key $s$.

    \item The GM selects a public key $x_i$ for each user $U_i$ and generates the corresponding private key $f(x_i)$ for $i = 1, \dots, n$.
    \item The GM calculates $Q = sP$.
    \item The GM publishes $P$, $Q$, $E$, $D$, $H(s)$, $H(\cdot)$, and $x_i$, while ensuring that $f(x_i)$ is shared exclusively with the respective user $U_i$ for $i = 1, \dots, n$.
 
\end{itemize}

\textit{The Confirmation Phase}
 \begin{itemize}
    \item Each user computes $f(x_i)P$ and sends $f(x_i)P \parallel ID_i$ to the GM and other users, where $ID_i$ is the identification number of the user, and $\parallel$ represents the concatenation of two values.
    
    \item If the GM is not involved in the verification process, any user in the group computes:
    $$
    C_i = \left( \prod_{\substack{r=1 \\ r \neq i}}^{m} \frac{-x_r}{x_i - x_r} \right) \bigg(f(x_i)P\bigg)
    $$
    for each user, where $m$ represents the number of users in the group, and $m$ must be greater than or equal to $t$.
    
    \item The user verifies whether:
    $$
    \sum_{i=1}^{m} C_i \stackrel{?}{=} Q 
    $$
    holds. If this condition is satisfied, the authentication process is successfully completed. Otherwise, the process needs to be restarted from the initialization phase.
\end{itemize}

\subsection*{Third Generation Group Authentication Scheme}

In this method a novel  mathematical tool, inner product space, has been employed to facilitate group membership confirmation.
Every member has a unique basis for a specific subspace, $W$. This basis enables them to verify their group membership. The nature of subspace $W$ allows one to select infinitely many bases for it, but knowing any basis for $W$ is sufficient to obtain the group’s secret key. However, the algorithm’s design ensures that breaking the group’s authentication scheme requires knowing the chosen basis, which the group manager $GM$ keeps secret. 

The group manager $GM$ employs a randomly selected function $ f(x) $ during the distribution of secrets. This function $f(x)$ can be a polynomial of degree $ d $, which the security analysis suggests should be larger than the expected number of users in group $G$.

Any user $U_i$ in group $G$ is assigned a public key $x_i$ (preferably an integer) and a secret key 
\[
B_i = \{f(x_i)v_1, f(x_i)r_1v_2, f(x_i)r_2v_3, \ldots, f(x_i)r_{n-1}v_n\}.
\]
\noindent Here, $r_1, r_2, \ldots, r_{n-1}$ are random numbers selected by the group manager ($GM$), and these values remain the same for all users.

Each group member’s secret $B_i$ is linearly independent within $ W $ and serves as their secret key. This ensures that the private information of each user is independent of others.

The group manager or any member publishes two random vectors $v$ and $g$. Participants who successfully compute the inner product of $g$ with the projection of $v$, $\mathrm{Proj}_Wv$ are confirmed as group members. Below, we present the steps of the algorithm.
\begin{itemize}
   
    \item A random element $v \leftarrow E$ is selected by the group manager or a group member.
    \item A nonce vector $g \leftarrow E$ is also selected and published alongside $v$.
    
    \item Each user $U$ computes the inner product $\langle g, \mathrm{Proj}_W v \rangle$ to derive a shared secret.
    
    \item The shared secret $s$ is calculated as:
    $$
    s \leftarrow \langle g, \mathrm{Proj}_W v \rangle
    $$
    
    \item Finally, the user $U_i$ releases the requested bits of $s$ for further verification.
\end{itemize}

 In this work, we enhance group authentication by introducing additional flexibility through the use of inner-product spaces. To this end, we begin by presenting the mathematical framework underlying the scheme.

\section*{Preliminaries}

\subsection*{Lagrange Interpolation Polynomials}

Lagrange interpolation describes a method for constructing the unique  polynomial $p_n(x)$ of degree less than $n$  that satisfies the $p_n(x_i) = y_i$ for each $i = 1,..., n$  given a set of points $(x_i, y_i)$.  

The equation is:

$$p_n(x)=\sum_{j=0}^{n}{y_i}{\mathcal{L}}_{n,j}(x)$$

The cardinal functions ${\mathcal{L}}_{n,j}(x) $ satisfy:
$$ {\mathcal{L}}_{n,j}(x) = \prod_{k=0,k \neq j}^n \frac{x-x_k}{x_j-x_k}  $$
and
$${\mathcal{L}}_{n,j}(x_i) =\begin{cases} 1\quad if & i=j \\
                     0\quad if &  i\neq j 
       \end{cases}$$

In order to find $p_n(0)$, we use:
$$p_n(0)=\sum_{j=0}^{n}y_j\prod_{k=0,k \neq j}^n \frac{-x_k}{x_j-x_k}$$

\subsection*{Inner Product Space}
A vector space E over a field $\mathbb{F}$ with an inner product is called an inner product space. An inner product is denoted by
$${\displaystyle \langle \cdot ,\cdot \rangle :E\times E\to \mathbb{F}}$$

An inner product $\langle \cdot ,\cdot \rangle$ on a vector space E is an assignment that for any two vectors
$u, v \in E$ , there is a real number $\langle u,v \rangle$ satisfying the following properties:

\begin{enumerate}
    \item \textbf{Linearity}: $\langle au + bv,w \rangle = a\langle u ,v \rangle + b\langle v,w \rangle $.
    \item \textbf{Symmetric Property}: $\langle u,v \rangle = \langle v,u \rangle$
    \item \textbf{Positive Definite Property}: For any $u \in E$ , $\langle u,u \rangle \geq 0 $ and $\langle u,u \rangle = 0 $ if and only if $u=0$.
\end{enumerate}

The following observation presents the concept of utilizing inner-product spaces in group authentication schemes.
\begin{theorem}\label{indepthm}
Let $V$ be a vector space of dimension $n$ over a real numbers. The probability that the randomly selected $d\le n$ vectors is linearly dependent is negligible.   
\end{theorem}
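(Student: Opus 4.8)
The plan is to first make the statement precise, since \emph{negligible} only carries meaning relative to a size parameter. I would fix a finite sampling model: assume each of the $d$ vectors is drawn by choosing its $n$ coordinates independently and uniformly from a finite subset $S$ of the scalar field of size $q=|S|$ (in the scheme the relevant field is in practice a large prime field $\mathbb{F}_q$, and $q$ plays the role of the security parameter), so that \emph{negligible} means $O(1/q)\to 0$ as $q\to\infty$. I would also note that in the genuinely continuous model, where the coordinates are drawn from any distribution absolutely continuous with respect to Lebesgue measure on $\mathbb{R}^n$, the same argument degenerates and the probability in question is exactly $0$.

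Next I would reduce linear dependence to the vanishing of a polynomial. Arrange the sampled vectors $v_1,\dots,v_d$ as the columns of an $n\times d$ matrix $M$ with entries $m_{ij}$. Since $d\le n$, the family $v_1,\dots,v_d$ is linearly dependent if and only if every $d\times d$ minor of $M$ vanishes; in particular, it is dependent only if one fixed maximal minor, say the determinant $\Delta$ of the top $d$ rows, vanishes. Viewing $\Delta=\det\bigl((m_{ij})_{1\le i,j\le d}\bigr)$ as a polynomial in the $d^2$ indeterminates $m_{ij}$, it has total degree $d$ and is not identically zero: evaluating at the configuration $v_j=e_j$ (the $j$-th standard basis vector) gives $\Delta=1\neq 0$.

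Then I would invoke the Schwartz--Zippel lemma: for a nonzero polynomial of degree $d$ over a field, with arguments chosen independently and uniformly from a set $S$ of size $q$, the probability that it evaluates to $0$ is at most $d/q$. Hence
$$\Pr\bigl[\,v_1,\dots,v_d\ \text{linearly dependent}\,\bigr]\ \le\ \Pr[\Delta=0]\ \le\ \frac{d}{q},$$
which is negligible in $q$; in the continuous model the zero set of $\Delta$ has Lebesgue measure zero, so the bound is $0$. As a more self-contained alternative that avoids quoting Schwartz--Zippel, I would expose the vectors one at a time: conditioned on $v_1,\dots,v_{k-1}$ being already independent, their span $W_{k-1}$ has dimension $k-1\le d-1<n$, so if the vectors are sampled uniformly from $\mathbb{F}_q^n$ then $\Pr[v_k\in W_{k-1}]=q^{k-1}/q^{n}\le q^{-1}$, and a union bound over $k=1,\dots,d$ again yields dependence probability at most $d/q$.

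The main obstacle is not technical but definitional: the statement as phrased ("over a real numbers", "randomly selected", "negligible") must be pinned down to a concrete sampling model before "negligible" is well-defined, and the right choice is the finite model above (which is also what the cryptographic application actually uses). Once that is fixed, both arguments are short. A secondary point to treat carefully is the case $d<n$: the reduction must go through a single maximal minor (or, equivalently, the observation that a generic completion of the family to $n$ vectors remains independent), rather than through a square determinant of order $n$.
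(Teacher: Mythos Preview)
Your proposal is correct. Both the Schwartz--Zippel reduction via a fixed maximal minor and the sequential union-bound argument are valid, and your insistence on first fixing a sampling model (finite set of size $q$, or an absolutely continuous distribution on $\mathbb{R}^n$) is exactly what is needed to give the word \emph{negligible} a meaning.

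The comparison you asked for is short: the paper does not actually prove this theorem. It is stated as an ``observation'' in the preliminaries and then used later (to conclude that the randomly chosen set $B$ is a basis for $W$), but no argument is supplied. So there is no competing approach to contrast with yours; your write-up simply fills a gap the paper leaves open. If anything, your discussion of the definitional issue is more careful than the paper itself, which oscillates between ``over the real numbers'' and working over a finite field without ever specifying the sampling distribution. The one small inconsistency in your note is that the alternative counting argument silently switches from ``coordinates uniform in a size-$q$ subset $S\subset\mathbb{R}$'' to ``vectors uniform in $\mathbb{F}_q^{\,n}$''; the bound $\Pr[v_k\in W_{k-1}]=q^{k-1-n}$ is exact only in the latter model, so if you keep that paragraph you should say so explicitly.
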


The following theorem presents a method to find the unique projection of a vector onto any space whose basis is known. 
\begin{theorem}\label{thm:ortho}
 Let $S = \{b_1, . . . , b_n\}$ be an orthogonal basis for a vector space E . Then every vector $ w \in V$ can be written uniquely as a linear combination of vectors in the basis S. In fact, if  $$w=c_1b_1 + c_2b_2 + ... + c_nb_n  $$ then $$c_j = \langle w,b_j \rangle = \frac{w \cdot b_j}{b_j \cdot b_j} $$ \end{theorem}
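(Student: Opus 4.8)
The plan is to prove the three assertions in order: existence of a representation of $w$ in the basis $S$, the closed form for its coefficients, and uniqueness. Existence is immediate from the hypothesis that $S$ is a basis of $E$: by definition every $w\in E$ lies in $\operatorname{span}(S)$, so there exist scalars $c_1,\dots,c_n$ with $w=\sum_{i=1}^{n}c_i b_i$. (I will treat the stated $c_j=\langle w,b_j\rangle=\tfrac{w\cdot b_j}{b_j\cdot b_j}$ as the orthonormal specialization of the general orthogonal formula $c_j=\langle w,b_j\rangle/\langle b_j,b_j\rangle$, which is what the computation actually yields.)

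For the coefficient formula I would fix an index $j$ and pair both sides of $w=\sum_{i=1}^{n}c_i b_i$ with $b_j$ in the inner product. Linearity in the first slot (property~1), together with the symmetric property (property~2), gives $\langle w,b_j\rangle=\sum_{i=1}^{n}c_i\langle b_i,b_j\rangle$. Orthogonality of $S$ forces $\langle b_i,b_j\rangle=0$ for every $i\ne j$, so the sum collapses to the single surviving term $c_j\langle b_j,b_j\rangle$. The only point requiring care — and essentially the sole obstacle — is that we must know $\langle b_j,b_j\rangle\ne 0$ before dividing; this follows from the positive-definite property (property~3), since $b_j$, being a member of a basis, is nonzero. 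Dividing yields $c_j=\langle w,b_j\rangle/\langle b_j,b_j\rangle$, i.e.\ $c_j=\tfrac{w\cdot b_j}{b_j\cdot b_j}$ in the notation of the statement.

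Uniqueness then comes for free: the argument above shows that \emph{any} scalars $c_1,\dots,c_n$ representing $w$ are forced to equal these specific values, so the representation is unique. Equivalently, if $\sum_i c_i b_i=\sum_i c_i' b_i$ then $\sum_i (c_i-c_i')b_i=0$, and pairing with $b_j$ and again invoking orthogonality and $\langle b_j,b_j\rangle\ne 0$ gives $c_j=c_j'$ for every $j$ — which incidentally reproves that an orthogonal family of nonzero vectors is linearly independent, tying this result back to Theorem~\ref{indepthm}. I anticipate no real difficulty beyond being careful to (i) use positive-definiteness for the nonvanishing of $\langle b_j,b_j\rangle$ and (ii) keep the linear combinations finite so that linearity of $\langle\cdot,\cdot\rangle$ may be applied term by term.
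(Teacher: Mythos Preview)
Your argument is correct and is the standard proof of this classical fact. Note, however, that the paper states Theorem~\ref{thm:ortho} as a preliminary result \emph{without proof}, so there is no paper proof to compare against; your write-up would serve perfectly well as the omitted justification. Your side remark about the notational slip in the statement (that the chain $c_j=\langle w,b_j\rangle=\tfrac{w\cdot b_j}{b_j\cdot b_j}$ only literally holds when $\langle b_j,b_j\rangle=1$, while the general orthogonal formula is $c_j=\langle w,b_j\rangle/\langle b_j,b_j\rangle$) is accurate and worth flagging.
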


If $S$ is not an orthogonal basis, the Gram-Schmidt method is used to find an orthogonal basis from $S$.

\subsection*{Gram-Schmidt method}
The Gram-Schmidt process is a method for converting a set of vectors into an orthogonal basis \cite{hoffmann1989iterative}. 

\begin{theorem}\label{thm:gram}
Given a basis $\{x_1, \ldots, x_p\}$ for a nonzero subspace $W$ of $\mathbb{R}^n$, define
\[
\begin{aligned}
v_1 &= x_1, \\
v_2 &= x_2 - \frac{x_2 \cdot v_1}{v_1 \cdot v_1} v_1, \\
v_3 &= x_3 - \frac{x_3 \cdot v_1}{v_1 \cdot v_1} v_1 - \frac{x_3 \cdot v_2}{v_2 \cdot v_2} v_2, \\
&\vdots \\
v_p &= x_p - \frac{x_p \cdot v_1}{v_1 \cdot v_1} v_1 - \frac{x_p \cdot v_2}{v_2 \cdot v_2} v_2 - \cdots - \frac{x_p \cdot v_{p-1}}{v_{p-1} \cdot v_{p-1}} v_{p-1}.
\end{aligned}
\]

Then $\{v_1, \ldots, v_p\}$ is an orthogonal basis for $W$. In addition,
\[
\text{Span}\{v_1, \ldots, v_k\} = \text{Span}\{x_1, \ldots, x_k\}.
\]

\end{theorem}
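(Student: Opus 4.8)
The plan is to prove all three assertions — that $\{v_1,\dots,v_p\}$ is orthogonal, that it is a basis for $W$, and that the nested span equalities hold — by a single induction on the index $k$, since the span equality at stage $k-1$ is exactly what makes the orthogonality computation at stage $k$ collapse. I would take as induction hypothesis at stage $k$ the statement: the vectors $v_1,\dots,v_k$ are all nonzero, pairwise orthogonal, and satisfy $\text{Span}\{v_1,\dots,v_k\} = \text{Span}\{x_1,\dots,x_k\}$. The base case $k=1$ is immediate: $v_1 = x_1 \neq 0$ because $\{x_1,\dots,x_p\}$ is a basis, pairwise orthogonality is vacuous, and the span identity is trivial.

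For the inductive step, assume the hypothesis for $k-1$. First I would note that the defining formula for $v_k$ is well posed, because each denominator $v_i\cdot v_i$ is nonzero by the hypothesis that $v_i\neq 0$. Writing $v_k = x_k - u_k$ with $u_k = \sum_{i=1}^{k-1}\frac{x_k\cdot v_i}{v_i\cdot v_i}\,v_i$, we have $u_k \in \text{Span}\{v_1,\dots,v_{k-1}\} = \text{Span}\{x_1,\dots,x_{k-1}\}$. To check orthogonality, fix $j < k$ and compute $v_k\cdot v_j = x_k\cdot v_j - \sum_{i=1}^{k-1}\frac{x_k\cdot v_i}{v_i\cdot v_i}(v_i\cdot v_j)$; by the inductive orthogonality hypothesis every term with $i\neq j$ vanishes, while the $i=j$ term equals $x_k\cdot v_j$, so $v_k\cdot v_j = 0$. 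To see $v_k\neq 0$: if $v_k = 0$ then $x_k = u_k \in \text{Span}\{x_1,\dots,x_{k-1}\}$, contradicting the linear independence of the original basis $\{x_1,\dots,x_p\}$. Finally, $v_k = x_k - u_k$ with $u_k \in \text{Span}\{x_1,\dots,x_{k-1}\}$ yields $\text{Span}\{v_1,\dots,v_k\} \subseteq \text{Span}\{x_1,\dots,x_k\}$, and $x_k = v_k + u_k$ with $u_k \in \text{Span}\{v_1,\dots,v_{k-1}\}$ yields the reverse inclusion, so the two spans coincide. This completes the induction.

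Taking $k = p$, the set $\{v_1,\dots,v_p\}$ consists of $p$ nonzero, pairwise orthogonal vectors; such a set is linearly independent (dotting a relation $\sum c_i v_i = 0$ with $v_j$ gives $c_j(v_j\cdot v_j) = 0$, hence $c_j = 0$ for every $j$), and it spans $\text{Span}\{x_1,\dots,x_p\} = W$, so it is an orthogonal basis for $W$; the intermediate span equalities were established along the way. The only genuinely delicate point is the bookkeeping around the non-vanishing of the $v_i$: this single fact is simultaneously what legitimizes the divisions in the recursion and the one place where linear independence of $\{x_1,\dots,x_p\}$ is used, so I would make sure to carry it explicitly through the induction. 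Everything else reduces to routine manipulation of the orthogonality relations.
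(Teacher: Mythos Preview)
Your proof is correct and is the standard inductive argument for Gram--Schmidt. The paper, however, does not actually prove this theorem: it states it in the Preliminaries as a known result from linear algebra (with a citation) and moves on, so there is no in-paper proof to compare against. Your write-up would serve perfectly well as a self-contained justification; the only minor stylistic remark is that the final paragraph's observation about linear independence of orthogonal nonzero vectors could be folded into the inductive hypothesis if you wanted to streamline, but as written it is fine.
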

The theorem ensures that any nonzero subspace possesses an orthogonal basis. 

\section*{Authentication Scheme}
To ensure clarity, we provide two definitions that aim to eliminate any potential ambiguity in the study of group authentication schemes:
\begin{definition}\label{GrupAuth}
A group authentication scheme (GAS) is a method that enables the simultaneous verification of multiple users belonging to a specific group.
\end{definition}

\begin{definition}
A fully-functional group authentication scheme (FGAS) simultaneously confirms the identities of many users, in addition to their membership in a specific group.
\end{definition}
The proposed method primarily focuses on the concept of FGAS, exploring its fundamental principles and potential advantages in secure authentication systems. The scheme is designed based on the following system and network model.

\subsection*{System Model}
The players in the scheme:
\begin{itemize}
	\item $G$: The set of users.
	\item $U_i$: Any member in the group in G.
	\item $GM$: The group manager likely has superior computational capabilities compared to any other member.
\end{itemize}

\begin{figure}[h!]
    \centering
	\includegraphics[width=5.7cm, height=4.7cm]{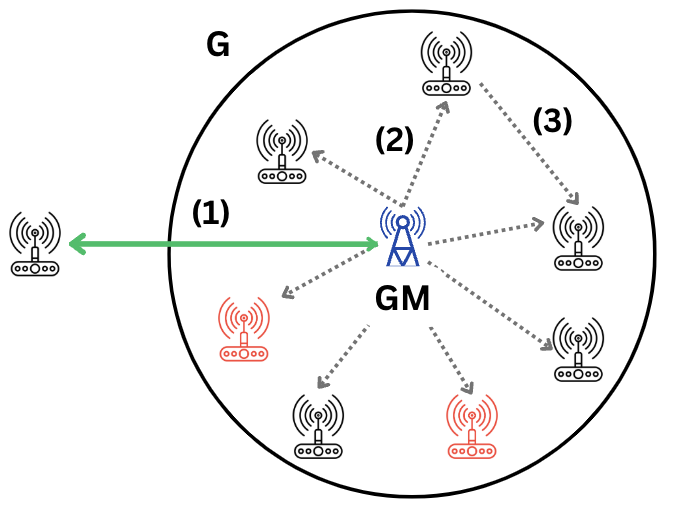}
	\caption{Communication Model: Channel (1) is designated as a dedicated line, while channels (2) and (3) are publicly accessible. }
	\label{fig:drone_communication}
\end{figure}

The registration of a user $U_i$ to the group $G$ is handled by the group manager $GM$. Therefore, we assume there is a secure channel between $U_i$ and $GM$. On the other hand, there might be more than one group manager or a member $U_j$ might act as a manager for certain situations. Apart from the registration phase, all other communication channels are assumed to be open to public as depicted in Figure \ref{fig:drone_communication}.

\subsection*{Network Model}
The basic structure of the network consisting of 3 channels:
\begin{enumerate}
    \item The first communication channel is between a user and the group manager responsible for handling registration. In this scenario, we are assuming it is the user's first time requesting to join the network.
    \item The second communication channel allows communication between the $GM$ and a user in the group so that the $GM$ can communicate directly with the group members.
    \item The third communication channel is designated for users to interact and communicate with one another. 
\end{enumerate}
 
Group members can communicate with each other over the publicly accessible channel. As previously stated, the only secure channel is the one used for the initial registration of a member. We assume that the group manager responsible for the first registration possesses superior computational and communication capabilities. The red color users are  non-legitimate users that need to be detected. Additionally, we consider the communication channel to be reliable, meaning that no bit errors occur during data exchange between any two entities.

\subsection*{Key Generation}
Typically, the responsibility for distributing the keys of a user inside a group is given to the group manager. For each group, the scheme utilizes subspaces from a pre-determined universal inner product space $E$. One possible choice for $E$ is an infinite-dimensional vector space; for instance, $E$ could consist of all polynomials over a finite field  $\mathbb{F}$ .\\
$KeySpace$= Any subspace $W$ of $E$.\\ 
$KeySpace_n$= Any subspace $W$ of $E$ with a dimension $n$. 

The dimension $n$ should be small in situations with limited memory. However, the value of $n$ is also tied to the security parameter of the scheme, so it must be chosen cautiously, considering the trade-off between security and cost factors. We will investigate the selection process of $n$ in the coming sections. 

The first task of the manager is to select a suitable subspace $W$ for the group $G$.  For the sake of simplicity, we select the universal space $E$ as $\mathbb R^d$ for a natural number $d$. In this case, the group manager choice for dimension, $n$, of $W$ should be less than $d$. In other words, $$KeySpace_n=\text{ Any subspace } W \text{ of } \mathbb R^d \text{ with dimendion } n< d.$$

The $GM$ determines the subspace $W$ by randomly selecting $n$ vectors in $E$. In other words, $GM$ selects and keeps the set $B = \{v_1,...,v_n\}$ for the subspace $W$ as a secret. Theorem \ref{indepthm} implies that the set $B$ is linearly independent and as $n<d$, we can definitely conclude that the set $B$ is basis for $W$.  The group manager $GM$ also employs  randomly selected function $f(x)$ is of degree $1$ while distributing the secrets of members.  Any user $U_j$ in the group $G$ is given a public key $x_{j}$ which in general is selected to be an integer. The user's, $U_j$, private key is obtained via its public information $x_j$, the group manager's basis and the functions that were selected by the group manager:
 $$B_{j} = \bigg \{ f(x_{j} )v_{1}, f(x_{j})v_{2}, f(x_{j})v_{3}, \ldots, f(x_{j} )v_{n} \bigg \}$$

The private information of each user is unrelated to that of others i.e., one user can not construct any of other members' private set. Algorithm 1 provides a step-by-step outline of how the $GM$ generates a unique key for every user.

\begin{algorithm}
\caption{KeyGen: Key Generation}\label{alg:cap}
\begin{algorithmic}
\\
\State \textbf{Require:}
\State $B : \{v_{1}, v_{2}, v_{3}, . . . , v_{n}\} \gets$ random $n$ vectors in $E$.
\State $f(x) \gets $ random element in $\mathbb{F}[x]$ of degree $1$.

\State \textbf{Key Generation}:
\State Public key: $U_j \gets x_j :x \in \mathbb{F}$.
\State Private key: $U_j \gets B_j : \{f(x_j )v_1, f(x_j)v_2, . . . , f(x_j)v_n\}$.
\end{algorithmic}
\end{algorithm}

\subsection*{Group Key Generation}

We present an algorithm designed to generate  a secret key for the group members. For a given subspace $W$, each member has a distinct basis.  A member chooses two vectors, $v$ and $h$, randomly from the vector space $E$, with a preference for $v$ and $h$ not belonging to $W$. If  $v$  is already part of  $W$, the projection operation would return the vector itself, which would not be meaningful from a computational perspective. The vectors $v$ and $h$ are publicly disclosed, and the $s$ is derived from them by calculating $$s = \langle Proj_W v, h \rangle$$

Computation of  $s$ requires knowledge of a basis for the subspace $W$ and the projection of $v$ onto the subspace of $W$ remains same regardless of  a basis for $W$. The projection of $v$ onto $W$ is given by  $\mathrm{Proj}_Wv$ , which can be computed as the sum of the projections of $v$ onto each basis vector of  $W$ .

\begin{algorithm}
	\caption{GKeyGen: Group Key Generation}\label{alg:cap}
	\begin{algorithmic}
    \\
            \State \textbf{Require:} Two random vectors $v, h \in E$.
		\State \textbf{Group Key Generation}:
		\State Each member $U_j$ computes $s \gets \langle \mathrm{Proj}_{B_j}v, h \rangle$.

	\end{algorithmic}
\end{algorithm}

Each user can compute $s$ by executing a single projection operation and utilize it as a key. This key can then be employed for secure communication within the group.

\subsection*{Group Authentication}
The purpose of group authentication is to verify the membership of many users in a certain group. The $GM$ establishes the set $B = \{v_1, \ldots, v_n\}$ and the functions $ f(x) $. The public key $x_j$ of a user $U_j$ is known by everyone. Each user knows their own set $B_j$ and the value $x_j$:

\begin{center}
\begin{tabular}{ |c||c| } 
 \hline
 Public key &   Basis for each user  \\ 
 \hline
 $x_1$  & $B_1=\{f(x_{1})v_{1}, f(x_{1})v_2,\dots , f(x_{1} )v_n\}$  \\ 
 \hline
 $x_2$  & $B_2=\{f(x_{2} )v_{1}, f(x_{2})v_2,\dots , f(x_{2} )v_n\}$  \\ 
 \hline
 ...   & ...  \\ 
 \hline
 $x_j$  & $B_j=\{f(x_j )v_{n}, f(x_j)v_2, \dots, f(x_j )v_n\}$  \\ 
 \hline
\end{tabular}
\end{center}

Lagrange Interpolation gives the following equation:

\begin{align}
f(0)=\sum_{j=0}^{r}{f_i(x_j)}\prod_{k=0,k \neq j}^r \frac{-x_k}{x_j-x_k} \label{lag_1}
\end{align}

The operations that each user joining the process needs to complete are listed below.\\
\textbf{Part 1:} The user $U_j$ computes  $A_j$:

\begin{equation}
	A_j = \prod_{k=0, k \neq j}^r \frac{-x_k}{x_j - x_k} \label{lag_2}
\end{equation}
where $x_k$ represents the public identities of the members joining the process.\\
\textbf{Part 2:} Each user multiplies the result of its computation,  $A_j$, by a basis vector whose position in the set is either predetermined or agreed upon during the process. If no prior agreement is made, they all use the first vector from their basis set. In other words, they all compute:
\[U_1: f(x_1) v_i A_1\]
\[U_2: f(x_2) v_i A_2\]
\[\vdots\]
\[U_j: f(x_j) v_i A_j\]
where we assume that they all know to use the $i^{th}$ position vector of their basis set.\\
\textbf{Part 3:}  Each user computes the inner product of their result with the random vector $g$  published by the group manager.

\[U_1: \langle f(x_1) v_i A_1, g \rangle\]
\[U_2: \langle f(x_2) v_i A_2, g \rangle\]
\[\vdots\]
\[U_j: \langle f(x_j) v_i A_j, g \rangle\]

Note that each computation gives $$\langle f(x_j)v_iA_j,g\rangle=f(x_j)A_j\langle v_i,g\rangle$$

Each user encrypts their result with the group key  $s$  and sends it to $GM$: 
$$Enc_{s}[f(x_j)A_j\langle v_i,g\rangle]$$

\textbf{Part 4:} The group manager, ($GM$), performs the following operation using the inputs received from the participants:
$$\sum_{i=1}^{r}f(x_j)A_i\langle v_i,g \rangle=\langle v_i,g \rangle \sum_{j=1}^{r}f(x_j)A_j$$

$GM$ confirms authentication if its computation gives the following:

$$\langle v_i,g \rangle \sum_{j=1}^{r}f(x_j)A_j  \stackrel{?}{=}  \langle v_i,g\rangle f(0)  $$

The above equality holds since by equations \eqref{lag_1} and \eqref{lag_2}, we have $$\sum_{j=1}^{n}f(x_i)A_j=f(0)$$

The following algorithm summarizes the steps each participant must perform for authentication.

\begin{algorithm}
	\caption{GroupAuth: Group Authentication}\label{alg:cap}
	\begin{algorithmic}\\
            \State \textbf{Require:} $x_j$, $B_j$, $g$
		\State Public key: $U_j \gets x_j : x \in \mathbb{F}$
		\State Private key: $U_j \gets \{f(x_j)v_1, f(x_j)v_2, \ldots, f(x_j)v_n\}$
		
		\State $g \gets$ random element in $E$
		\State \textbf{Group Authentication:}
		\State Each user computes 
		
		\State Part 1: $A_j$: $\prod_{k=0,k \neq j}^n  \frac{-x_k}{x_j-x_k}$
		\State Part 2: $U_j: f(x_j) v_i A_j$
		\State Part 3: $U_j: \langle f(x_j) v_i A_j, g \rangle$
		
		\State $GM$ receives results from each user and confirms authentication if the equation below holds.
		
		\State Part 4: $\langle v_i,g \rangle \sum_{j=1}^{n}f(x_j)A_j  \stackrel{?}{=}  \langle v_i,g\rangle f(0)$
		
	\end{algorithmic}
\end{algorithm}

\subsection*{New Member Added to the Group by a Member}
When the $GM$ is unavailable to handle the addition of new members to group, an existing member should have the capability to add a new member. This ensures that the new member can securely communicate with the rest of the group. Also, the $GM$ should be able to easily identify which member added it to the group. 

Note that $U_j$ has,
 $$B_{j} = \bigg \{ f(x_{j} )v_{1}, f(x_{j})v_{2}, f(x_{j})v_{3}, \ldots, f(x_{j} )v_{n} \bigg \}$$
$U_j$ selects a random number  $t$  and creates a new basis for $W$,
 $$B_{new} = \bigg \{ tf(x_{j} )v_{1}, tf(x_{j})v_{2}, tf(x_{j})v_{3}, \ldots, tf(x_{j} )v_{n} \bigg \}$$
 
The host $U_j$ does not need to know the function  $f(x)$  to include the new member in the group conversation. It is important to note that the new member can easily attend the group communication by utilizing its own basis.
 
\begin{algorithm}
	\caption{GroupMemAdd: New Member Added to the Group by a Member $U_j$}\label{alg:groupadd}
	\begin{algorithmic}\\
		\State \textbf{Require:} $x_j$, $B_j = \{f(x_j)v_1, f(x_j)v_2, f(x_j)v_3, \ldots, f(x_j)v_n\}$
		\State $t \gets$ Random number selected by $U_j$
		\State \textbf{Member Additon:}
		
		\State $U_j$ creates a new basis for $W$,
		\State $B_{new} = \{t  f(x_j)v_1, t  f(x_j)v_2, t  f(x_j)v_3, \ldots, t  f(x_j)v_n\}$

	\end{algorithmic}
\end{algorithm}

\subsection*{Malicious Actor Detection in Groups }

This algorithm is designed to identify a malicious actor within a group. It starts by dividing the group of  $n$  people into two subgroups, $G_1$ and $G_2$.  If either of them contains  malicious users, it is further divided into smaller subgroups, and the authentication is repeated. This process continues recursively until the subgroup size is reduced to two. If both subgroups are found to be free of them, they are combined into a single group. The algorithm ensures that all malicious actors are isolated and identified through systematic division and verification of subgroups. We should note here that unlike the first and second generation group authentication schemes, a non-member can not join the authentication process since it must have a basis for $W$ to create the group secret key $s$.

\begin{algorithm}
\caption{MalActDetect: Malicious Actor Detection in Groups}\label{alg:cap}
\begin{algorithmic}\\
\State \textbf{Require:} Number of people in the group, $n$, Number of enemies in the group (unknown)
\State \textbf{Initial:}

\State $n$ is the number of people in the group
\State Number of enemies in the group (unknown)
\State Goal: Find out who the enemies are

\State \textbf{Algorithm:}
\While{$n > 1$}
\State Divide $n$ into two subgroups: $G_1$ and $G_2$
\State Check whether each subgroup contains malicious actors by checking both subgroups
\If {$G_1$ contains malicious actors}
\State Divide $G_1$ into two subgroups and repeat the process
\EndIf
\If {$G_2$ contains malicious actors}
\State Divide $G_2$ into two subgroups and repeat the process
\EndIf
\If {both $G_1$ and $G_2$ are determined to be valid users}
\State Combine these subgroups into a single group
\EndIf
\EndWhile
\end{algorithmic}
\end{algorithm}

\section*{Security Analysis}

In this section, we examine cryptoanalysis and prominent threat models and assess the robustness of the proposed algorithm against these attacks.

\subsection*{Cryptanalysis}

The scheme has the following set up: \textit{KeyGen} algorithm determines a subspace  $W$ within a universal space $V$. Deciding a space $W$ means determining a suitable basis for it, denoted as $B=\{w_1,\dots,w_n\}$. Following this, the scheme dictates the selection of a function $f(x)$.  While the function can theoretically be of any degree, for mutual authentication to be feasible, the scheme enforces $f(x)$ to be linear. In other words, the key consists of two components: the basis of the chosen subspace $W$ and a function $f(x)=ax+b$  where $a$ and $b$ are randomly selected from the base field. This discussion leads us to the following description. 

\begin{definition}
The key space of the scheme is made up of all subspaces of the universal space $V$.
\end{definition}

\begin{theorem}\label{Thm_Inf}
Let $W$ be the subspace selected by the central authority. Define $R$ as a random variable representing the dimension of the space chosen by any central authority. For a randomly chosen $a$, we have
$$Pr(R=a)=\epsilon$$
where $\epsilon$ is a negligible quantity.
\end{theorem}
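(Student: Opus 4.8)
The plan is to reduce the statement to an elementary counting argument over the set of admissible dimensions, combined with the standard convention that the ambient dimension $d$ (equivalently, the size of the key space) grows with the security parameter $\lambda$.

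First I would pin down the sample space of $R$. By the \emph{KeyGen} setup the central authority fixes the subspace $W \subseteq \mathbb{R}^d$ by choosing $n$ random vectors, and the only constraint on $n$ is $1 \le n < d$; Theorem~\ref{indepthm} guarantees that the $n$ chosen vectors are linearly independent with overwhelming probability, so the realized dimension is indeed $n$. Hence $R$ takes values in $\{1, 2, \ldots, d-1\}$, a set of cardinality $d-1$. Next I would invoke the parameter convention: for the scheme to resist brute-force search over the key space at all, $d$ must be super-polynomial in $\lambda$, so one may assume $d - 1 \ge 2^{\lambda}$ (or at least $d-1 = \lambda^{\omega(1)}$).

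With these in place the bound is immediate. Reading "for a randomly chosen $a$" literally, with $a$ uniform on $\{1,\dots,d-1\}$ and independent of the authority's choice, one has $Pr(R = a) = \sum_{k=1}^{d-1} Pr(R=k)\,Pr(a=k) = \tfrac{1}{d-1}\sum_{k} Pr(R=k) = \tfrac{1}{d-1} \le 2^{-\lambda}$, which is negligible; this gives the claim with $\epsilon = \tfrac{1}{d-1}$. If instead one reads the statement as holding for every fixed $a$, then $Pr(R=a) \le 2^{-H_\infty(R)}$, and one additionally needs the authority to sample the dimension with min-entropy $\Omega(\lambda)$ — e.g. uniformly on $\{1,\dots,d-1\}$ — which I would simply adopt as part of the parameter-generation specification.

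The main obstacle is not the arithmetic but the modeling: the statement only becomes meaningful once one fixes (i) how the dimension is distributed and (ii) how $d$ scales with $\lambda$, so the honest proof largely consists of making these two choices explicit and then reading off the bound. A secondary subtlety I would address is confirming that no public output of the scheme — the vectors $v$, $h$, $g$, the identities $x_j$, the encrypted shares, or the scalars $\langle v_i,g\rangle$ — leaks $n$: each such value lies in $\mathbb{R}^d$ (or in $\mathbb{F}$) regardless of $n$, so conditioning on the public transcript does not decrease $H_\infty(R)$, and the bound on $Pr(R=a)$ is preserved.
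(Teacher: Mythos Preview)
Your argument works entirely in the finite-dimensional setting $E=\mathbb{R}^d$ and obtains negligibility by positing $d-1\ge 2^{\lambda}$. The paper's proof is a single line and goes in the opposite direction: it assumes the universal space $V$ is \emph{infinite}-dimensional (the polynomial-space option mentioned in the KeyGen discussion), so that every natural number is an admissible dimension and ``every natural number has an equal probability of being the dimension of the selected subspace.'' The text immediately following the theorem confirms this reading: it says Theorem~\ref{Thm_Inf} formalizes the infinite-dimensional choice, then concedes that in the finite instantiation $V=\mathbb{R}^n$ the adversary \emph{may} have non-negligible probability of guessing $\dim W$, and remarks that in practice the subspace dimension often falls below~100. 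So your key hypothesis $d-1\ge 2^{\lambda}$ is precisely what the paper disclaims for its concrete parameter regime.

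In short, you have proved a finite-parameter analogue under a scaling assumption the paper does not adopt, whereas the paper's argument rests solely on infinite dimensionality. Your version is arguably more honest as probability theory --- there is no uniform distribution on $\mathbb{N}$, so the paper's one-liner is informal at best --- but it does not match the theorem's intended hypothesis. The min-entropy discussion and the transcript-leakage check you append are absent from the paper's proof; the paper defers the question of what an eavesdropper can learn from public data to the subsequent theorems.
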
 
\begin{proof}
The universal space $V$ has been selected to have infinite dimension. Therefore, every natural number has an equal probability of being the dimension of the selected subspace.
\end{proof}
In scenarios where the universal space selected for the scheme is infinite-dimensional, an external eavesdropper lacks the capacity to accurately infer the dimension of the chosen subspace \( W \). Theorem \ref{Thm_Inf} is introduced to formalize the constraints and considerations involved in selecting an appropriate universal space for the scheme.

As previously noted, a polynomial space over an arbitrary field \( \mathbb{F} \) constitutes a viable candidate for practical implementation. However, for the sake of simplicity and computational tractability, in our experiments we adopt the finite-dimensional Euclidean space \( V = \mathbb{R}^n \) as the universal space over the real field \( \mathbb{R} \). In this setting, the dimensionality of \( V \) is publicly known and thus accessible to an eavesdropper, specifically \( \dim(V) = n \). Moreover, the adversary may possess a non-negligible probability of correctly guessing the dimension of the selected subspace \( W \subset V \).

Despite this, the subsequent analysis (Theorem 5 and 6) demonstrates that such dimensional information does not compromise the secrecy of \( W \). Crucially, knowledge of the dimension alone does not yield any substantive insight into the structure or identity of the secret subspace. Even within the finite-dimensional case \( V = \mathbb{R}^n \), for any \( d < n \), there exist uncountably many distinct subspaces of dimension \( d \). This inherent abundance ensures that the dimension of \( W \), while potentially guessable, does not facilitate its identification or reconstruction by an adversary.

It is important to emphasize that, in practical applications, the scheme typically operates over low-dimensional subspaces. Consequently, the preceding theorem may have limited relevance in real-world scenarios, where the dimensionality of the selected subspace often falls below 100. Nonetheless, the following theorems establishe that an eavesdropper remains incapable of verifying whether a randomly selected dimension coincides with that of the actual secret subspace.

\begin{theorem}
Let \( V \) be the publicly known key space over a base field \( \mathbb{F} \). Then, an eavesdropper has no means of verifying whether a randomly guessed dimension coincides with the true dimension of the secret subspace \( W \subset V \).
\end{theorem}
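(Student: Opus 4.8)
The plan is to read ``an eavesdropper has no means of verifying'' as the precise statement that the eavesdropper's entire view of a protocol run is, statistically or at worst computationally, independent of $\dim W$, so that no test applied to that view can correlate with the true dimension and conditioning on the view leaves the distribution of $\dim W$ unchanged. The first step is therefore to write down exactly what the eavesdropper sees: the publicly known ambient space $V$ (so $\dim V$ is known), the public keys $x_1,x_2,\dots$ of the participants, the broadcast vectors $v,h$ and the nonce $g$, and the ciphertexts $\mathrm{Enc}_s\!\big[f(x_j)A_j\langle v_i,g\rangle\big]$ transmitted in Part~3 of the group-authentication step. I will call this tuple the \emph{transcript} $\mathcal{T}$.

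The heart of the argument is the observation that every plaintext component of $\mathcal{T}$ is drawn from a distribution that never references the secret subspace. The public keys $x_j$ are chosen as integers independently of the secret basis $B=\{v_1,\dots,v_n\}$ and of $f$; and $v$, $h$, $g$ are sampled from $E$ independently of $W$ (the only constraint, $v,h\notin W$, removes a set of measure zero and leaves the law of $v,h$ unchanged). Consequently the joint law of $(V,\{x_j\},v,h,g)$ is identical whether $\dim W=d_1$ or $\dim W=d_2$. The one transcript component that is \emph{derived} from $W$ is the set of ciphertexts; there I would invoke semantic security of $\mathrm{Enc}(\cdot)$ together with the fact that the eavesdropper cannot compute the group key $s=\langle\mathrm{Proj}_W v,\,h\rangle$ without a basis of $W$ --- the very secrecy assumption underlying the scheme and formalized in the companion results of this section --- to conclude that the ciphertext distribution is indistinguishable from encryptions of a fixed message and hence carries no information about $\dim W$. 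Composing the two observations, the map $d\mapsto\big(\text{law of }\mathcal{T}\text{ given }\dim W=d\big)$ is constant up to a negligible error.

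From constancy of this map the theorem follows immediately: for any distinguisher $\mathcal{A}$ and any guessed dimension $d$, $\Pr[\mathcal{A}(\mathcal{T})=1\mid\dim W=d]$ is the same for all $d$ up to $\mathrm{negl}$, hence $\Pr[\dim W=d\mid\mathcal{T}]=\Pr[\dim W=d]\pm\mathrm{negl}$, so $\mathcal{T}$ yields no advantage over guessing from the prior. In the infinite-dimensional instantiation this reproduces the negligible success probability of Theorem~\ref{Thm_Inf}; in the finite-dimensional instantiation $V=\mathbb{R}^n$ the conclusion should be read as ``the transcript adds nothing to the prior over $\{1,\dots,n-1\}$'', which is the strongest statement possible once $n$ is public.

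I expect the main obstacle to be the ciphertext component, and specifically the need to establish that $s$ is unpredictable to the eavesdropper \emph{before} appealing to semantic security: the natural justification for that unpredictability is precisely the secrecy of $W$ proved in the neighbouring theorems, so the write-up must sequence these dependencies carefully --- ideally by phrasing a single reduction showing that any algorithm which reads $\dim W$ off $\mathcal{T}$ either breaks the encryption scheme or extracts nontrivial information about $W$. A secondary, expository difficulty is that the statement quantifies over ``a randomly guessed dimension'' without fixing a prior on how $W$ (hence $\dim W$) is sampled; I would sidestep this by stating and proving the stronger ``transcript independent of $\dim W$'' claim, from which non-verifiability follows for every prior.
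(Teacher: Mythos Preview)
Your proposal is correct and substantially more rigorous than the paper's own argument. The paper's proof is essentially a two-sentence assertion: the eavesdropper sees only ``a finite collection of vectors in $V$, which are indistinguishable from uniformly random elements in the absence of structural knowledge about $W$,'' and therefore has ``no statistical or algebraic basis to confirm the correctness of a guessed dimension.'' There is no transcript analysis, no treatment of the ciphertexts transmitted in Part~3, and no invocation of semantic security. Your decomposition into plaintext components (independent of $W$ by construction) and ciphertext components (handled via semantic security conditioned on unpredictability of $s$) is a genuinely different and stronger route: it makes explicit the computational assumption needed to cover the encrypted traffic, flags the dependency on the neighbouring secrecy results, and yields a quantitative ``law of $\mathcal{T}$ is constant in $\dim W$'' statement valid for any prior. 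What the paper's approach buys is brevity and an information-theoretic flavour, since it never names a computational assumption; what yours buys is an argument that actually accounts for everything the eavesdropper observes and that could be made formally precise. The circularity concern you raise about establishing unpredictability of $s$ before invoking semantic security is real and is simply not addressed in the paper.
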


\begin{proof}
The only information available to the eavesdropper consists of a finite collection of vectors in \( V \), which are indistinguishable from uniformly random elements in the absence of structural knowledge about \( W \). Since no additional data is revealed regarding the construction or dimension of \( W \), the adversary lacks any statistical or algebraic basis to confirm the correctness of a guessed dimension. Thus, any such guess remains unverifiable.
\end{proof}

\begin{corollary}
From an outsider's point of view, the scheme ensures an information-theoretically secure approach to authentication and key establishment. 
\end{corollary}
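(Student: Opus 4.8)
The plan is to show that the joint distribution of everything an outsider can observe is statistically independent of the group secret, so that the adversary's posterior on the secret coincides with its prior; this is precisely what ``information-theoretically secure'' should mean here. Concretely, fix the secret to be the pair $(W,f)$, where $W$ is the subspace chosen by \textit{KeyGen} and $f(x)=ax+b$ is the linear function, and let $T$ denote the random variable collecting the \emph{complete} outsider transcript: the public parameters, the public keys $\{x_j\}$, the broadcast vectors $v,h,g$, and every message actually put on the open channels, namely the ciphertexts $\mathrm{Enc}_{s}[\,f(x_j)A_j\langle v_i,g\rangle\,]$ with $s=\langle \mathrm{Proj}_W v,h\rangle$. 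The goal is to establish $I(T;(W,f))=0$, from which both the authentication and the key-establishment claims follow at once.

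First I would dispatch the easy components of $T$. The public parameters and the public keys $x_j$ are assigned independently of the choice of $(W,f)$ — the $x_j$ are mere labels and no value $f(x_j)$ is ever exposed in the clear — so they carry no information about the secret. The vectors $v,h,g$ are, by construction, drawn uniformly at random from $E$ independently of $W$; by the preceding theorems (and by Theorem~\ref{Thm_Inf} in the infinite-dimensional model) they reveal neither $W$ nor even its dimension, and, more to the point, their marginal law does not depend on $(W,f)$ in any way.

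The core step is the treatment of the transmitted ciphertexts. Every message leaving a user's device has the form $\mathrm{Enc}_{s}[m_j]$ with $m_j=f(x_j)A_j\langle v_i,g\rangle$; the plaintexts depend on the secret, but the session key $s$ is itself a one-time quantity derived from $W$ and the fresh public vectors. I would invoke the assumption that the underlying cipher is perfectly secret (a one-time-pad-style encryption keyed by $s$), so that conditioned on $s$ the ciphertext is independent of the plaintext, and since $s$ ranges over a family of possibilities that is uncountable in the polynomial-space model — and, by the dimension theorems, of size far exceeding any feasible adversarial search in the $\mathbb{R}^n$ realization — the observed $\mathrm{Enc}_{s}[m_j]$ is independent of $(W,f)$. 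Combining the three parts gives that $T$ is independent of $(W,f)$, hence $I(T;(W,f))=0$; in particular $H(s\mid T)=H(s)$, so the established key is as uncertain to an outsider after the protocol as before it, and no outsider can manufacture an accepting transcript except with the negligible probability of blindly guessing a valid basis for $W$.

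I expect the main obstacle to be making the independence argument for the ciphertext component fully rigorous: it rests on the encryption being information-theoretically — not merely computationally — secure, and on $s$ carrying enough genuine entropy and being used only once, so the honest write-up must state these hypotheses explicitly rather than sweep them under \texttt{Enc}. A secondary subtlety is that in the finite-dimensional realization $V=\mathbb{R}^n$ the phrase ``information-theoretic'' must be interpreted with respect to appropriate continuous distributions (or a suitable discretization), since uniform distributions on $\mathbb{R}^n$ or on its set of $d$-dimensional subspaces are not literally available; the cleanest and least fussy setting in which to state and prove the corollary is the infinite-dimensional polynomial space $E$, with the $\mathbb{R}^n$ case treated as a practical approximation.
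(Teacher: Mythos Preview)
Your proposal is considerably more careful than the paper's own treatment, which offers no proof at all: the corollary is simply asserted as an immediate consequence of the preceding theorem (that an eavesdropper cannot verify any guessed dimension of $W$), together with the observation that there are uncountably many candidate subspaces. The paper's implicit argument is purely about the unlearnability of $W$ from the publicly broadcast vectors and never engages with the ciphertexts $\mathrm{Enc}_s[\cdot]$ that travel on the open channel. Your route is genuinely different: you set up a mutual-information statement $I(T;(W,f))=0$ over the \emph{entire} outsider transcript and then decompose $T$ into public parameters, fresh random vectors, and ciphertexts, treating each in turn. What this buys you is that it surfaces the hypotheses the paper sweeps under the rug---in particular that $\mathrm{Enc}$ must itself be information-theoretically (not merely computationally) hiding and that the session key $s$ must be fresh and high-entropy---whereas the paper's one-line inference simply does not confront the ciphertext component. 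Conversely, the paper's version is shorter precisely because it restricts ``outsider's view'' to the bare broadcast vectors and appeals only to the uncountability of subspaces; your analysis is the more honest one, and the obstacles you flag (perfect secrecy of $\mathrm{Enc}$, and the measure-theoretic awkwardness of ``uniform on $\mathbb{R}^n$'' versus the cleaner infinite-dimensional model) are real and are not resolved in the paper either.
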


\begin{theorem}
    Assume that the eavesdropper is aware of the dimension of the subspace $W$. This knowledge does not leak any secrecy of the group communication to the eavesdropper. 
\end{theorem}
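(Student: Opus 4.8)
The plan is to show that learning the single number $\dim W = k$ only shrinks the adversary's search space from \emph{all} subspaces of the ambient space $V$ down to the $k$-dimensional ones, and that this restricted family is still a continuum on which the group key $s=\langle \mathrm{Proj}_W v,h\rangle$ is not constant; hence nothing usable about the group communication is revealed. First I would list precisely what an outsider sees: the public ambient space $V=\mathbb R^{n}$, the participant identities $x_1,\dots,x_r$, the broadcast vectors $v,h$ and the authentication nonce $g$, the public fact that $f$ is linear, and the ciphertexts $\mathrm{Enc}_s[f(x_j)A_j\langle v_i,g\rangle]$. By the scheme's description the vectors $v,h,g$ and the identities $x_j$ are sampled independently of $W$, so they impose no algebraic relation on the secret basis $B=\{v_1,\dots,v_n\}$; the only quantities linking the transcript to $W$ are $s$ and (inside the ciphertexts) the coefficients of $f$, both of which remain hidden. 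Thus the adversary's posterior on $W$ after learning $\dim W=k$ is simply its prior restricted to the set $\mathcal G_k$ of $k$-dimensional subspaces of $V$.

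Next I would invoke the geometry of $\mathcal G_k$. For $0<k<n$ the Grassmannian $\mathcal G_k$ is a real analytic manifold of positive dimension $k(n-k)$, hence has the cardinality of the continuum, so the extra datum $\dim W=k$ does not single out $W$ — this mirrors, in the finite-dimensional setting, the uncountability argument already used after Theorem~\ref{Thm_Inf}. I would then push this one step further to the key itself: the map $W\mapsto\langle \mathrm{Proj}_W v,h\rangle$ on $\mathcal G_k$ is real-analytic and non-constant whenever $v,h\notin W$ — which is exactly the non-degeneracy condition the scheme already enforces on $v$ and $h$ — so each of its level sets is a proper analytic subvariety of positive codimension, in particular a null, nowhere-dense subset of $\mathcal G_k$. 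Consequently, a subspace drawn at random from $\mathcal G_k$ reproduces the true $W$, or even merely the true value of $s$, only with negligible probability, a statement I would phrase in the same ``measure-zero / negligible'' style as Theorems~\ref{indepthm} and~\ref{Thm_Inf}.

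Finally I would dispose of the ciphertexts: assuming the symmetric encryption $\mathrm{Enc}$ is secure, $\mathrm{Enc}_s[\cdot]$ without knowledge of $s$ is indistinguishable from randomness and therefore independent of $W$, so it contributes no information about the secret subspace; combined with the previous paragraph, the adversary's knowledge of $\dim W=k$ leaves its uncertainty about $W$ — and hence about the shared key and every message protected by it — essentially unchanged, which is the assertion of the theorem. The step I expect to be the main obstacle is upgrading the ``non-constant level set'' claim from ``for almost every pair $(v,h)$'' to ``for the particular broadcast pair $(v,h)$'': here I would argue that $v\notin W$ already forces $\mathrm{Proj}_W v$ to depend non-trivially on $W$, and that analyticity of the projection operator in the Grassmannian coordinates then guarantees that the fibre of $W\mapsto s$ through the realised value is nowhere dense in $\mathcal G_k$, so the negligibility conclusion holds for the actual transcript and not merely generically.
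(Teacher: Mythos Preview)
Your proposal is correct and shares its central observation with the paper --- that the ambient space contains infinitely (indeed uncountably) many subspaces of any fixed dimension $d<\dim V$, so learning $\dim W$ alone cannot pin down $W$. The paper's proof, however, consists of precisely that single remark and nothing more: ``The universal space $V$ has infinitely many subspaces of dimension $d$. Therefore, knowing only $d$ does not reveal any information about the specific subspace $W$ that has been selected.'' Your treatment is considerably more thorough: you formalise the residual search space as the Grassmannian $\mathcal G_k$, argue that the key map $W\mapsto\langle\mathrm{Proj}_W v,h\rangle$ is real-analytic and non-constant so that each of its fibres is a null set, and separately dispose of the ciphertexts via the assumed security of $\mathrm{Enc}$. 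These additions buy you an argument that actually connects ``too many candidate subspaces'' to ``the adversary cannot recover $s$ or any message encrypted under it'', a link the paper leaves entirely implicit; the price is an order of magnitude more work than the paper invests, and the analyticity step you flag as the main obstacle is indeed extra machinery that the paper never attempts to justify at its chosen level of informality.
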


 \begin{proof}
 Let $d$ be the dimension of the subspace $W$. The universal space $V$ has infinitely many subspaces of dimension $d$. Therefore, knowing only $d$ does not reveal any information about the specific subspace $W$ that has been selected.    
 \end{proof}

The previous discussion holds from an outsider’s perspective. Next, we provide an analysis of the scheme from an insider's point of view. The following information is available to group members.

\begin{enumerate}
    \item A basis $B$ for the subspace $W$.
    \item Two publicly known vectors, $v$ and $h$, are broadcast during the authentication and key generation phases. 
\end{enumerate}

\begin{theorem}
 A legitimate user $U$ possesses its own basis $B_u$, and it is infeasible to derive group secret function $f(x)$ or the group secret basis  $B$.  
\end{theorem}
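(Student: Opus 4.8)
The plan is to argue that a single user's secret $B_u = \{f(x_u)v_1, \ldots, f(x_u)v_n\}$ is statistically consistent with too large a family of $(B, f)$ pairs for the user to pin down either the group basis $B = \{v_1, \ldots, v_n\}$ or the linear function $f(x) = ax + b$. First I would fix the viewpoint: the user $U$ knows its public key $x_u$ and the $n$ vectors $w_k := f(x_u) v_k$ for $k = 1, \ldots, n$, and nothing else that is private to the group (the broadcast vectors $v, h$ are public and, by the cryptanalysis discussion above, indistinguishable from random). The key observation is that the pair $(B, f)$ enters $U$'s data only through the products $f(x_u) v_k$, so the map $(B, f) \mapsto B_u$ has a large fiber.

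The main step is an explicit symmetry/counting argument exhibiting that fiber. For any nonzero scalar $\lambda \in \mathbb{F}$, replace $f(x) = ax + b$ by a linear polynomial $\tilde f$ with $\tilde f(x_u) = \lambda \cdot f(x_u)$ — since a linear polynomial has two free coefficients and we are imposing one condition, there is a full one-parameter family of such $\tilde f$ for each $\lambda$ — and simultaneously rescale the basis by $\tilde v_k = \lambda^{-1} v_k$. Then $\tilde f(x_u)\tilde v_k = f(x_u) v_k = w_k$ for all $k$, so $(\{\tilde v_1, \ldots, \tilde v_n\}, \tilde f)$ produces exactly the same secret $B_u$. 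More generally one may apply any invertible linear change that fixes $\mathrm{span}(w_1, \ldots, w_n)$ appropriately; but even the scalar family already shows that $f$ is undetermined (its value at any point $\neq x_u$ ranges over all of $\mathbb{F}$ as $\lambda$ and the free coefficient vary) and that $B$ is undetermined (each $v_k$ can be any nonzero scalar multiple of $w_k$, and in fact $W = \mathrm{span}(B)$ itself is not forced since $\mathrm{span}(w_1,\ldots,w_n) = \mathrm{span}(v_1,\ldots,v_n) = W$ only tells $U$ the span, not a usable basis, and recovering the \emph{secret} basis $B$ that the $GM$ chose is what matters). I would then invoke Theorem~\ref{indepthm} to note the $w_k$ are linearly independent with overwhelming probability, so $U$ does at best learn the $n$-dimensional space $W$ — but learning $W$ as a set is explicitly not enough, as the whole scheme rests on $B$ being secret, and knowledge of $f(x)$ requires a second evaluation point which $U$ does not have.

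The hard part will be making precise what "infeasible to derive" means and closing the gap between "$U$ can recover the span $W$" and "$U$ learns nothing useful." If the claim is meant information-theoretically about $f$ and about the \emph{specific} basis $B$, the fiber argument above is essentially complete and I would just need to phrase it as: conditioned on $U$'s view, the posterior over admissible $(B, f)$ remains a nontrivial (indeed infinite, or field-size-indexed) family, so no value of $f(x)$ at a fresh point and no particular basis vector $v_k$ is determined. If instead the intended claim also asserts $U$ cannot recover $W$ as a subspace, that is simply false given $W = \mathrm{span}(B_u)$ with high probability, so I would restrict the statement to the function $f$ and the \emph{basis} $B$ (as literally written — "the group secret function $f(x)$ or the group secret basis $B$"), which is defensible. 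I expect to spend most of the effort clarifying this scope and then the proof reduces to the one-line scaling symmetry $f(x_u)v_k = (\lambda f(x_u))(\lambda^{-1} v_k)$ together with the two-parameter freedom in choosing a linear $f$ through a single prescribed value.
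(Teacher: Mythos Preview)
Your proposal is correct and rests on the same core observation the paper uses: from the products $w_k = f(x_u)\,v_k$ one cannot separate the scalar $f(x_u)$ from the vectors $v_k$. The paper's proof is a one-line remark to this effect (``each vector in the basis of the user is a multiple of $f(x_u)$\ldots\ as $U$ does not have knowledge of $f(x)$, retrieving the vectors in the secret group basis is impossible''), which is close to assuming what is to be proved. Your explicit scaling symmetry $(f,v_k)\mapsto(\tilde f,\lambda^{-1}v_k)$ with $\tilde f(x_u)=\lambda f(x_u)$, together with the two-coefficient freedom in a linear $\tilde f$, makes the argument non-circular and actually exhibits the fiber of $(B,f)$ pairs consistent with $B_u$. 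You also flag a point the paper does not: the user \emph{does} recover $W=\mathrm{span}(B_u)$, so the theorem must be read as secrecy of the specific basis $B$ and of $f$, not of the subspace $W$; that scoping is exactly right and worth stating, since the paper's later Proposition~\ref{thm:prop1} relies on the same distinction.
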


\begin{proof}
Each vector in the basis of the user is a multiple of $f(x_u)$ where $x_u$ is the public identity of the user $U$. As $u$ does not have the knowledge of the function $f(x)$, retrieving the vectors in the secret group basis is impossible.   
\end{proof}

One might argue about what happens when two or more users get together to obtain the function in order to access the group manager's secret. We discuss this and other situations below.

\subsection*{Sybil Attack}

Douceur \cite{douceur2002sybil} first introduced the concept of a Sybil attack in peer-to-peer networks. This attack involves an adversary generating multiple fake identities, known as Sybil entities, to impersonate numerous users, either concurrently or at different times. In the proposed work, the Sybil attack will be non-functional, given the proposition below.

\begin{proposition}\label{thm:prop1}
	Even if an adversary  $\mathcal{A}$ acquires the bases of multiple users, it still cannot determine the group manager's secret.
\end{proposition}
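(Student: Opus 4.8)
The plan is to exhibit a continuous family of mutually distinct secrets that all produce \emph{exactly} the same view for $\mathcal{A}$, so that no computation performed on the compromised bases can single out the true pair $(B,f)$. First I would fix notation: suppose $\mathcal{A}$ has obtained the private bases $B_{j_1},\dots,B_{j_k}$ of users $U_{j_1},\dots,U_{j_k}$ (the case $k=1$ is exactly the preceding theorem), so that $\mathcal{A}$'s entire data consists of the vectors $f(x_{j_\ell})\,v_i$ for $1\le\ell\le k$ and $1\le i\le n$, together with the public identities $x_{j_1},\dots,x_{j_k}$. Call a candidate secret $(\widetilde B,\widetilde f)$, with $\widetilde B=\{\widetilde v_1,\dots,\widetilde v_n\}$ and $\widetilde f$ a degree-one polynomial, \emph{consistent} with $\mathcal{A}$'s view if $\widetilde f(x_{j_\ell})\,\widetilde v_i=f(x_{j_\ell})\,v_i$ for every $\ell$ and $i$; proving the proposition amounts to showing the true $(B,f)$ cannot be distinguished from the other consistent candidates.

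The key step is a scaling symmetry. For every nonzero scalar $c\in\mathbb{R}^{\ast}$ the pair $\big(\{c^{-1}v_1,\dots,c^{-1}v_n\},\,c\,f\big)$ is consistent, since $(c f)(x_{j_\ell})\cdot c^{-1}v_i=f(x_{j_\ell})\,v_i$; these candidates are pairwise distinct because the $v_i$ are nonzero (so $c^{-1}v_i\neq(c')^{-1}v_i$ when $c\neq c'$), $\mathbb{R}^{\ast}$ is uncountable, and each is a perfectly legitimate choice of key material ($\{c^{-1}v_i\}$ is again an ordered basis and $c f$ is again linear), hence a priori as plausible as the real one. Consequently $\mathcal{A}$'s view is compatible with uncountably many secrets, and any strategy—even computationally unbounded—that outputs a single guess for $(B,f)$ is correct with probability $0$. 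To see that this is the whole story, I would then observe that the only extra information $\mathcal{A}$ gains from holding several bases rather than one is the ratios $f(x_{j_\ell})/f(x_{j_m})$ (the same constant scaling links the $i$-th vectors across users for every $i$); writing $f(x)=a(x+\tau)$ with $\tau=b/a$, each such ratio equals $(x_{j_\ell}+\tau)/(x_{j_m}+\tau)$, which pins down $\tau$ but leaves the leading coefficient $a$ entirely free. Thus the residual uncertainty is \emph{exactly} the one-parameter family above, and recruiting further users contributes nothing.

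I expect the only real obstacle to be the bookkeeping of degenerate configurations: when $f(x_{j_\ell})=0$ for some compromised index (that user's basis is the zero tuple and leaks nothing), when $a=0$ (so $f$ is constant and the ratio argument must be restated directly in terms of $b$), or coincidences forcing a ratio to equal $1$. In each of these cases strictly less information is revealed, so the conclusion only becomes easier. The conceptual content—the invariance of the leaked data under $(\,v_i\mapsto c^{-1}v_i,\ f\mapsto c f\,)$—is immediate, and it is precisely what prevents the Sybil adversary from reconstructing the group manager's secret.
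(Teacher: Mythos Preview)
Your argument is correct and shares the paper's core idea: the scalar--vector products $f(x_i)v_k$ cannot be uniquely factored, which you phrase as the scaling symmetry $(B,f)\mapsto(c^{-1}B,\,cf)$. Your version is more thorough---the paper stops at the observation that for any putative value of $f(x_i)$ there is a matching choice of $v_k$, whereas you additionally quantify what multiple compromised bases reveal (only the ratios $f(x_{j_\ell})/f(x_{j_m})$, pinning down $b/a$ but not $a$) and handle the degenerate cases---but the approach is essentially the same.
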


\begin{proof}
In practice, the inner product spaces are generally considered over large finite or real fields. Assume that the adversary $A$ has the following bases:  
$$
\{f(x_i)v_1, \ldots, f(x_i)v_n\}, \{f(x_j)v_1, \ldots, f(x_j)v_n\}
$$
Based on this knowledge, isolating \( f(x_i) \) from the basis elements is not possible. To determine the group manager's selected basis and the function \( f(x) \), it is necessary to distinguish \( f(x_i) \) and \( v_k \) from their product, \( f(x_i)v_k \). However, since both entities belong to an infinite field, they can take on any value. In fact, for each \( a \) in the base field, there is nothing preventing the assignment \( f(x_i) = a \), as there is no additional information to verify whether this assignment is correct.

\end{proof}

\subsection*{Denial of Service (DoS) Attack}
An earlier version of the group authentication method based on polynomial interpolation has a lack of identifying malicious actors. If one or more actors send incorrect results during authentication or key establishment, the process must be halted.  More critically, pinpointing the culprit during such an intervention is impossible. In other words, in case a presence of an attacker, neither authentication nor key establishment can be performed via the first and second generation group authentication schemes. Since detecting a malicious actor is not feasible, the entire system may eventually need to be shut down.

The third-generation group authentication scheme does not depend on the malicious actors for authentication and key establishment. As mentioned earlier, although the third-generation scheme prevents non-members from joining the process from the beginning, it has no ability to determine which users to involve in the process. In other words, it only verifies the membership of an entity. Our proposed work enhances the scheme by incorporating an additional feature that enables user authentication. That is, the proposed scheme not only confirms the membership of entities joining the process but also verifies their identities simultaneously. The structure of the scheme inherently prevents non-members from participating in the authentication process. Furthermore, it enables mutual authentication, allowing any member to easily detect a malicious actor when mutual authentication is applied. In the proposed scheme, only users possessing a valid basis set for a specific subdomain are allowed to participate in the key negotiation phase, and only valid users can perform authentication. This design effectively prevents attacker interference and safeguards the scheme against DoS attacks.

\subsection*{Existential Forgery Attack}

In this class of adversarial strategy, the attacker attempts to construct a valid basis that would enable unauthorized traversal of the authentication protocol. However, even when the secret space is constrained to a one-dimensional subspace, the attacker faces a fundamental obstacle: the universal space  $E$ admits uncountably many such subspaces, rendering the identification of the correct basis computationally and theoretically infeasible.

Moreover, possession of the group secret does not confer any meaningful advantage to the attacker. This is because the group secret alone does not reveal structural information about the underlying space as any subspace of $E$  regardless of its construction, retains the potential to yield the final secret through legitimate protocol execution \cite{Guzey}. Thus, the entropy and ambiguity inherent in the space selection process act as a robust defense against basis reconstruction.

Even in the hypothetical scenario where an attacker manages to obtain a legitimate basis, this basis remains unusable for authentication purposes. The reason lies in the individualized integration of the key pair $(x_i, f(x_i)) $ into each participant's basis by the $GM$. During the authentication phase, the $GM$ performs a verification step by checking whether the public key $x_i$ corresponds to its embedded private counterpart  $f(x_i)$. Since the function $f $ is known exclusively to the $GM$ and is not derivable from public or private information, the attacker is unable to replicate or validate the required key pair. Consequently, the authentication process is effectively safeguarded against impersonation or unauthorized access.

\subsection*{Formal Security Model}

\begin{definition}[Game-Based Security Model]

The traditional game-based approach to cryptographic security defines an experiment between an adversary and a challenger to evaluate the security guarantees of a protocol. In this framework, the adversary $\mathcal{A}$ interacts with the system by issuing queries intended to reveal sensitive information, such as user secrets or session keys. A logical predicate $P$ specifies the \emph{bad event} the condition under which the adversary is considered successful. The probability of this event occurring quantifies the adversary’s advantage.

The formal game model introduced in~\cite{brzuska2011composability} defines a security game $G$ that operates over a structured state:
\[
\text{State} = (\text{LSID}, \text{SST}, \text{EST}, \text{LST}, \text{MST})
\]
where:
\begin{itemize}
    \item $\text{LSID}$: Local session identifiers,
    \item $\text{SST}$: Session-specific state,
    \item $\text{EST}$: Global protocol-related information,
    \item $\text{LST}$: Local session state,
    \item $\text{MST}$: Global game-related metadata.
\end{itemize}

The game is further characterized by the tuple:
\[
(\textit{setupE}, \textit{setupG}, Q, \textit{Valid}, \chi, P)
\]
where:
\begin{itemize}
    \item $\textit{setupE}$ and $\textit{setupG}$ initialize the environment and global state respectively,
    \item $Q$ is the set of allowed adversarial queries (e.g., \texttt{Send}, \texttt{Reveal}, \texttt{Corrupt}, \texttt{Test}),
    \item $\chi$ is the behavior algorithm that processes queries,
    \item $\textit{Valid}$ is a predicate that determines whether a query is admissible,
    \item $P$ is the bad event predicate.
\end{itemize}

Each query from $\mathcal{A}$ is processed by $\chi$, and a response is returned only if the query satisfies the $\textit{Valid}$ predicate.

The security experiment is denoted by:
\[
\text{Exp}^{G}_{\mathcal{A}, \pi}(1^\lambda)
\]
where $\pi$ is the protocol under evaluation and $\lambda$ is the security parameter. The output of the experiment is a bit $b \in \{0,1\}$, indicating whether the adversary has succeeded in triggering the bad event $P$.

\end{definition}

We extend the traditional game-based cryptographic model to capture the notion of \emph{basis vector leakage}, which is critical in protocols where vector-based secrets underpin security. If an adversary $\mathcal{A}$ obtains any  of a user's basis set, the confidentiality of the protocol may be compromised. To formally analyze this threat, we define the \emph{basis-secrecy game} $G^{\mathsf{BSec}}$, which models the adversary’s ability to distinguish genuine basis vectors from random noise.

\vspace{1em}
\textbf{State.} The game operates over the structured state:
\[
\text{State} = (\text{LSID}, \text{SST}, \text{EST}, \text{LST}, \text{MST})
\]

\vspace{1em}
\textbf{Game Definition.} The basis-secrecy game is defined by the tuple:
\[
G^{\mathsf{BSec}} = (\textit{setupE}, \textit{setupG}, Q, \textit{Valid}, \chi, P_{\mathsf{BSec}})
\]

\vspace{1em}
\textbf{Setup Phase.} The challenger runs $(\textit{setupE}, \textit{setupG})$ to generate the game environment and common parameters. The internal state is initialized as $(\text{LSID}, \text{SST}, \text{EST}, \text{LST}, \text{MST})$.

\vspace{1em}
\textbf{Query Phase.} The adversary $\mathcal{A}$ interacts with the challenger by issuing queries from the set $Q$. Each query is processed by the algorithm $\chi$, which returns a response only if the query satisfies the $\textit{Valid}$ predicate.

\vspace{1em}
\textbf{Challenge Phase.} At some point, the adversary issues a challenge. To evaluate its success, the challenger samples a random bit $b \leftarrow \{0,1\}$ and responds as follows:
\begin{itemize}
    \item If $b = 0$, the challenger reveals a genuine basis set $B_j = \{w_1, \dots, w_n\}$, split into a known subset $\{w_1, \dots, w_r\}$ and the remaining vectors $\{w_{r+1}, \dots, w_n\}$.
    \item If $b = 1$, the challenger reveals the same known subset $\{w_1, \dots, w_r\}$ along with $n - r$ uniformly random vectors from the underlying vector space.
\end{itemize}

The adversary outputs a guess $b'$, and is said to win the game if $b' = b$. The adversary’s advantage in this game quantifies its ability to distinguish genuine basis vectors from random noise, thereby measuring the leakage resilience of the protocol.

\textbf{Leakage Bound.} Let $W_i$ be a vector space with basis $B_j = \{w_1, \dots, w_n\}$. For any adversary $\mathcal{A}$, the probability that $\mathcal{A}$ can correctly infer the remaining basis vectors $\{w_{r+1}, \dots, w_n\}$ given partial knowledge of $\{w_1, \dots, w_r\}$ is negligible. Formally,
\[
P\left( \bigwedge_{j=r+1}^n \left( w'_j \in W_i \,\wedge\, w'_j \not\parallel \{w_1, \dots, w_r\} \right) \right) = \mathsf{negl}(\eta)
\]
where each $w'_j$ is linearly independent of the known subset, and $\mathsf{negl}(\eta)$ denotes a negligible function in the security parameter $\eta$.

\begin{proof}
Let $W \subset E$ be a subspace of a real vector space $E$, where $\dim W < \dim E$. Since $E$ is defined over the field of real numbers $\mathbb{R}$, it admits uncountably many subspaces of any fixed dimension less than $\dim E$, including those of dimension $\dim W$.\\
Suppose an adversary is given $\dim W - 1$ linearly independent vectors $\{w_1, \dots, w_{\dim W - 1}\} \subset W$. Despite this partial knowledge, there still exist uncountably many distinct subspaces of dimension $\dim W$ that contain these vectors. Consequently, the probability of correctly identifying the exact subspace $W$ from among all such candidates is zero in the measure-theoretic sense.
This reflects the inherent ambiguity in reconstructing a full basis from partial information in high-dimensional real vector spaces.
\end{proof}

\subsection*{Formal Analysis Using the Scyther Tool}
The Scyther tool provides formal security analysis and is used to test the proposed group authentication method. 
With the language called SPDL of this program, the roles in the protocol can be embedded to this test environment and resistance to attacks can be observed through this tool. The code of proposed method is provided (see the Appendix), and as seen in the code, an environment with three users and one group manager has been designed. Thanks to this tool, it is proven that the proposed method passes the aliveness, weak agreement, non-injective agreement, non-injective synchronisation, and secrecy of information tests. According to the Figure \ref{fig:scyther_result}, the tool shows that the proposed method does not complete with another unintended entity during the process; it is weak, but there is still consensus between the parties. In addition, the protocol does not contain illegitimate values and preserves the secrecy of information.

The Scyther code is designed with three users and one $GM$. Although this setup represents a small-scale prototype, it provides valuable insight into how the system would behave in a large-scale environment. Our proposed method consists of four main parts: the first three correspond to the user side, while the fourth belongs to the $GM$. On the user side, each user independently computes values such as $A_j$, $f(x_j)v_i A_j$, and $\langle f(x_j)v_i A_j, g \rangle$ without interacting with other users; their communication occurs solely with the $GM$. In the final stage, the $GM$ verifies whether the users are legitimate group members. During authentication, all users follow the same procedure, meaning that $U_1$, $U_2$, and $U_3$ act as isomorphic copies of the same user role. As the number of users increases, they continue to behave in the same way as in this Scyther scenario. As the users do not communicate directly and the outputs of this phase are encrypted using group keys, enlarging the group size does not introduce any additional structural vulnerabilities to the attack surface. In summary, although an attacker may collect information from large number of users, the Scyther analysis remains unaffected.

\begin{figure}[h!]
	\centering
	\includegraphics[width=8cm, height=12cm,keepaspectratio]{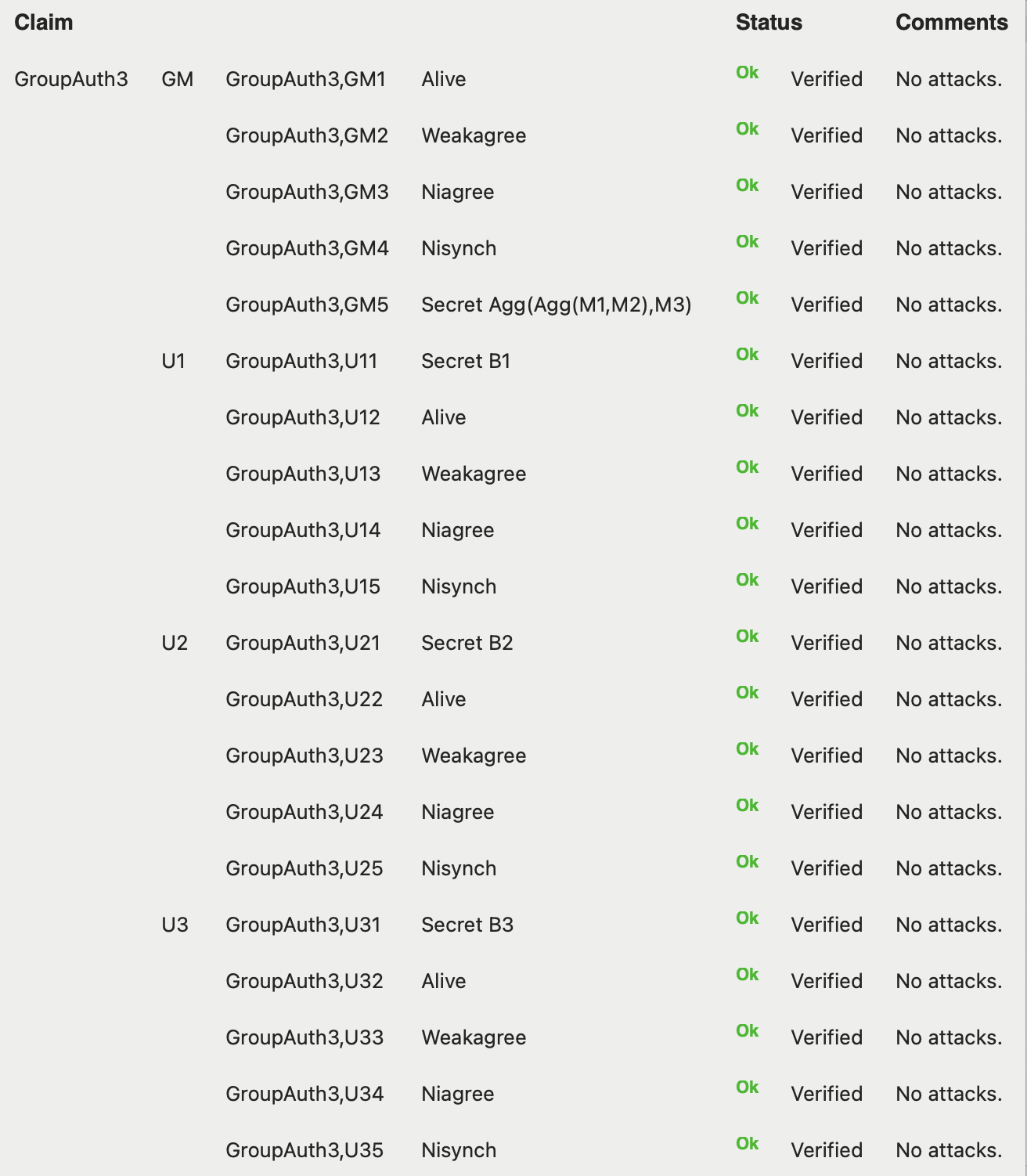}
	\caption{The Scyther tool results show that our proposed method is resistant to all attack scenarios considered by the tool.}
	\label{fig:scyther_result}
\end{figure}

\section*{Performance Analysis}

Real-time tests were conducted to evaluate the practicality of the proposed group authentication scheme using SageMath. The test series was performed on a computer running macOS, powered by an Apple M2 processor with 16 GB of RAM.  The comparison results in \cite{aydin} indicate that their approach requires less time and energy compared to others. While implementing their study, which utilizes elliptic curves, in our test environment, we adhere to their suggested implementation method. For \cite{semal2018certificateless}, we select the BLS12-381 elliptic curve because it is a well-known pairings-friendly elliptic curve. Therefore, the test environment uses the same elliptic curve for all other EC-based works \cite{aydin}, \cite{zhang2019pa}. In addition, the $py\_ecc$ Python library is employed, which enables the use of pairing operations.

This table \ref{tab:comp_cost} demonstrates the computational overhead of the benchmark methods, where $n$ refers to the number of users participating in the group authentication process.

\begin{table}[t!]
\centering

\begin{tabular}{|l|l|ccccccc|}
\hline  
Reference & Entity & Mult & Div & EC\_mult & EC\_add & B\_pairing & Mod\_exp & In\_prod \\
\hline
Semal et al. \cite{semal2018certificateless} & User & - & - & $2(n-1)$ & $n-1$ & - & $1$ & - \\
             & GM   & - & - & -        & -      & $n-1$ & $2(n-1)$ & - \\
            & Total   & - & - & $2(n-1)$   & $n-1$   & $n-1$ & $2n-1$ & - \\
\hline
Zhang et al. \cite{zhang2019pa} & User & $2(n-1)$ & - & - & - & - & - & - \\
             & GM   & $3n$ & - & $n+2$ & $1$ & - & - & - \\
             & Total   & $5n-2$ & - & $n+2$   & $1$   & - & - & - \\
\hline
Aydin et al. \cite{aydin}  & User & $n$ & $n-1$ & $1$ & - & - & - & - \\
             & GM   & - & - & $1$ & - & - & - & - \\
             & Total   & $n$ & $n-1$ & $2$   & -   & - & - & - \\
\hline
Our work     & User & $n$ & $n-1$ & - & - & - & - & $1$ \\
             & GM   & $1$ & - & - & - & - & - & $1$ \\
             & Total   & $n+1$ & $n-1$ & -   & -  & - & - & $2$  \\
\hline
\end{tabular}
\caption{Comparison of computational costs for user and $GM$ in different protocols. 
In this table, \textbf{Mult} denotes multiplication, \textbf{Div} denotes division, 
\textbf{EC\_mult} is elliptic curve point multiplication, \textbf{EC\_add} is elliptic curve point addition, 
\textbf{B\_pairing} denotes bilinear pairing, \textbf{Mod\_exp} is modular exponentiation, and \textbf{In\_prod} represents inner product computation.}
\label{tab:comp_cost}
\end{table}

These operations can be sorted by complexity among them by computing the each one single operation. This table \ref{tab:avg-times} is arranged from the most costly to the least costly. We did not analyze field multiplication/division and real number multiplication/division separately, as their differences are negligible. In addition, the execution time of the inner product operation is calculated based on a 10-element vector.

\begin{table}[t!]
\centering

\begin{tabular}{l c}
\hline
Operation & Average Time (ms) \\
\hline
Bilinear pairing    & 352.361 \\
Modular exponentiation & 18.711 \\
EC multiplication   & 6.415 \\
Inner product        & 0.02 \\
EC addition         & 0.017 \\
Division        & 0.001 \\
Multiplication  & 0.00055 \\
\hline
\end{tabular}
\caption{Average computation times of cryptographic operations.}
\label{tab:avg-times}
\end{table}

\begin{figure}[t!]
	\centering
	\includegraphics[width=\columnwidth, keepaspectratio]{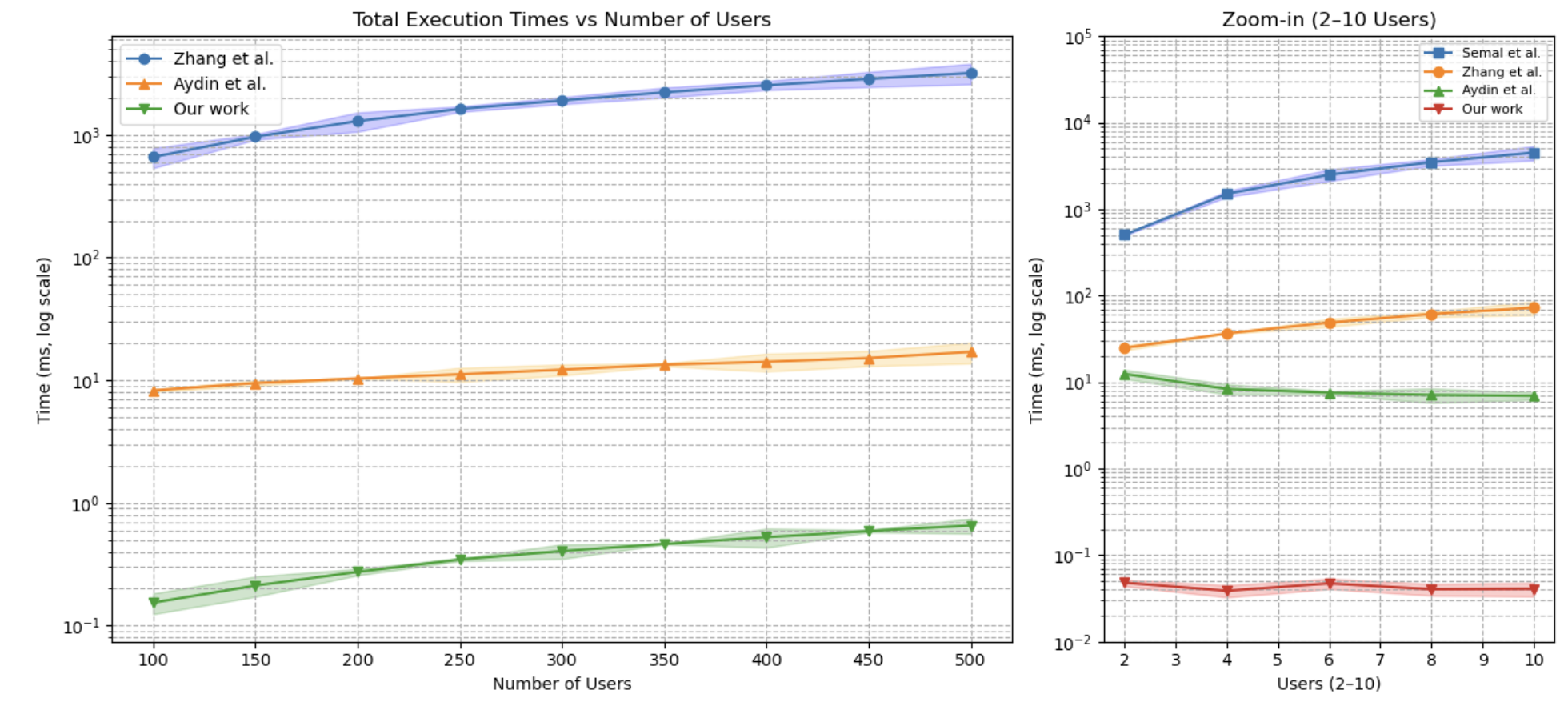}
	\caption{Total performance comparison with respect to the number of users. Since the work of Semal et al. takes a long time to run with a large number of users, the four studies were evaluated in the zoom-in graph for 2–10 users. For fast algorithms, it is normal that the line appears irregular for a small number of users because the differences are very small. The other three studies were evaluated for 100–500 users.}
	\label{fig:total_zoom}
\end{figure}

For a comprehensive understanding, the total authentication process for each group method was analyzed in Figure \ref{fig:total_zoom}, and cumulative time measurements were recorded. Due to the bilinear pairing operation, the work of Semal et al. \cite{semal2018certificateless}is not suitable for large group scales. To better illustrate performance differences, especially at small group sizes, we additionally provide zoom-in views for 2 to 10 users. When we focus on the other three methods, our method is more efficient because it does not involve any costly operations and only relies on multiplication and division operations. Aydin et al. \cite{aydin} and our method are much faster than the other two methods; therefore, at small group sizes their execution times differ only slightly, appearing almost the same. The results presented in Figure \ref{fig:total_zoom} clearly demonstrate that the proposed method surpasses the other methods in terms of total processing time.

\begin{figure}[t!]
	\centering
	\includegraphics[width=\columnwidth, keepaspectratio]{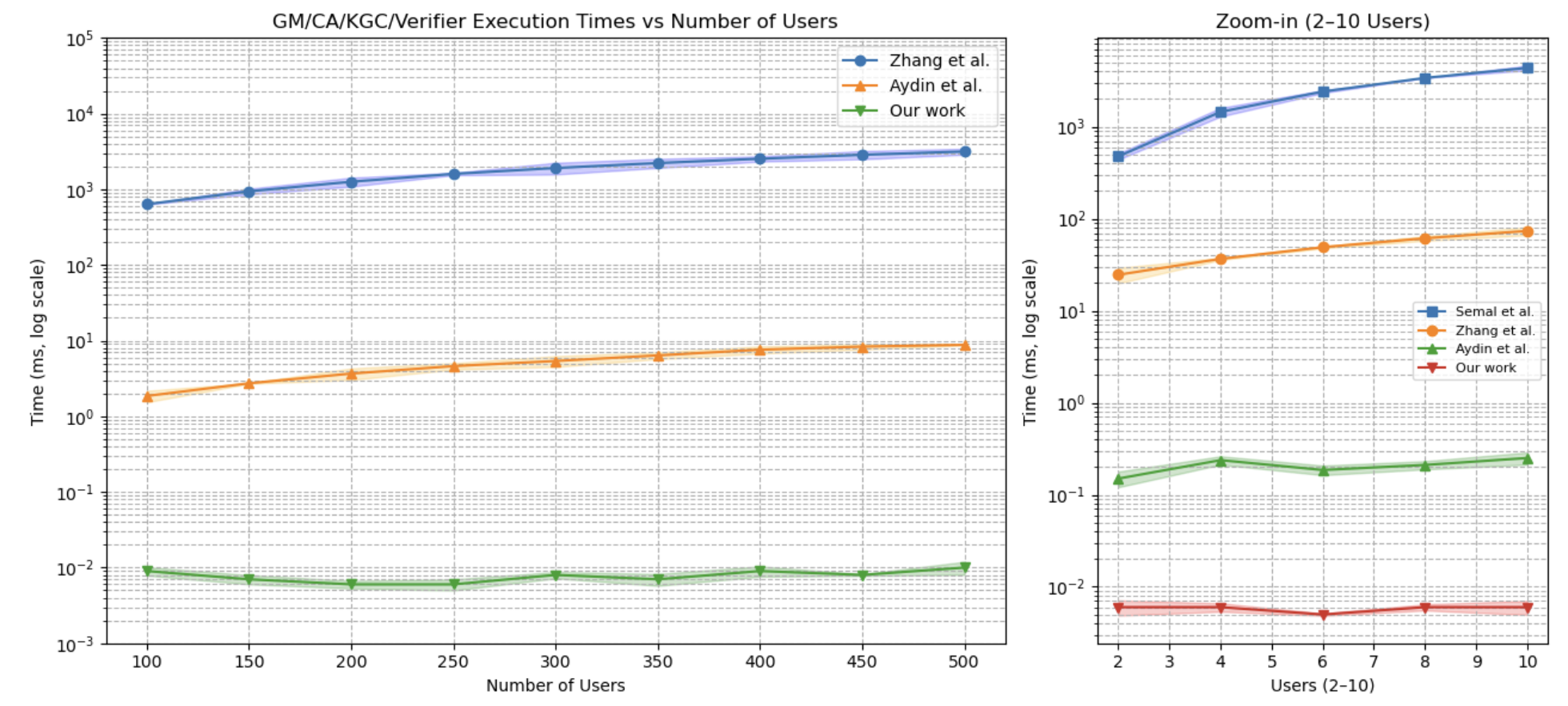}
	\caption{GM/CA/KGC/Verifier performance comparison with respect to the number of users. Since the work of Semal et al. takes a long time to run with a large number of users, the four studies were evaluated in the zoom-in graph for 2–10 users. For fast algorithms, it is normal that the line appears irregular for a small number of users because the differences are very small. The other three studies were evaluated for 100–500 users.}
	\label{fig:GM_zoom}
\end{figure}

The analysis Figure~\ref{fig:GM_zoom} involves identifying the operations performed by GM, CA, Key Generation Center (KGC), and Verifier and implementing them to measure the real-time cost of the scheme for individual users. For the same reason as in Figure~\ref{fig:total_zoom} (Total Execution Times vs. Number of Users), the left side illustrates the results for 100 to 500 users, while the right side provides a zoomed-in view for 2 to 10 users. As shown, Semal et al.\cite{semal2018certificateless} is significantly slower, while Aydin et al. and our method remain lightweight, with our method consistently achieving the lowest execution time.

\begin{figure}[t!]
	\centering
	\includegraphics[width=10cm, height=6.5cm]{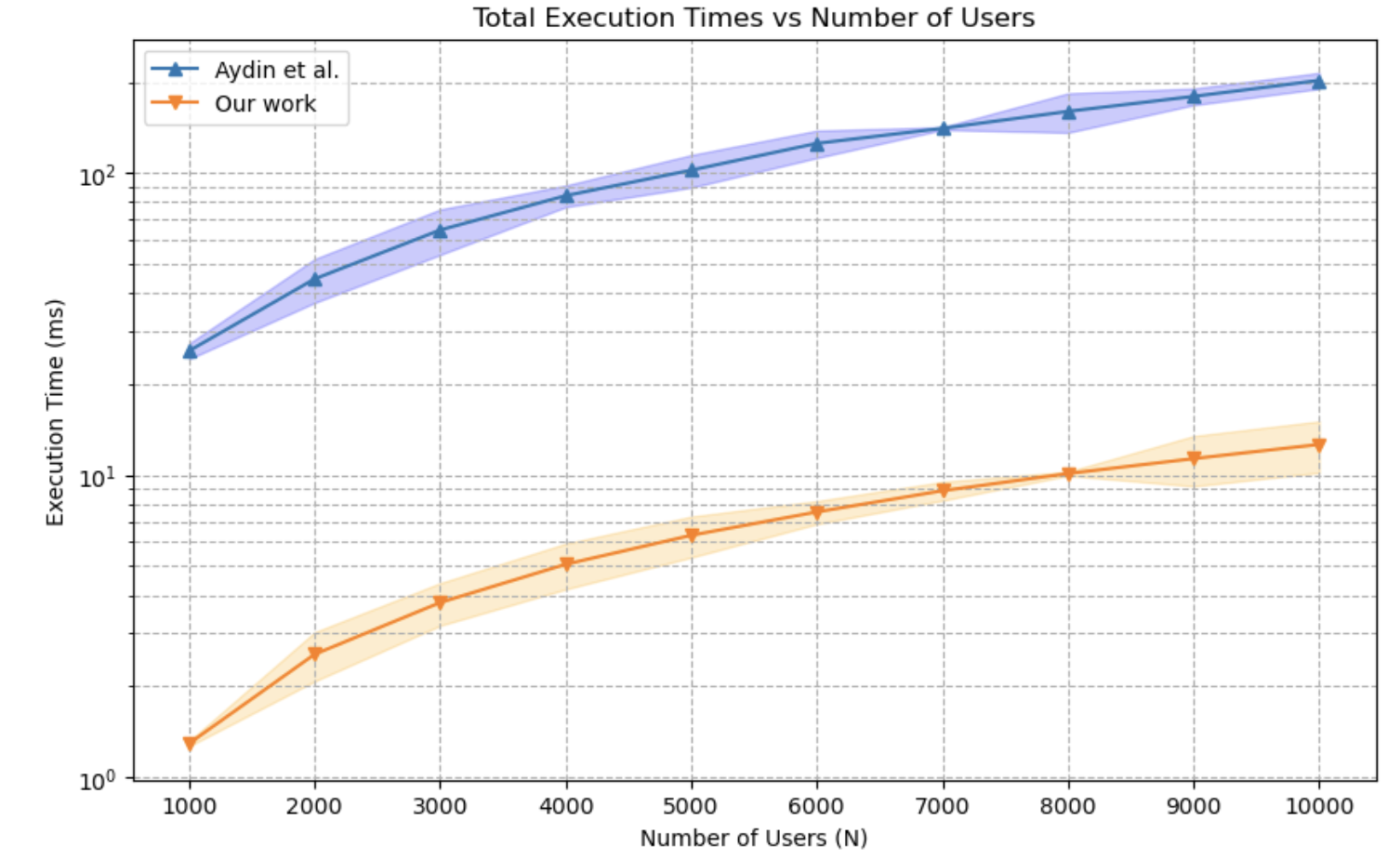}
	\caption{Total performance comparison between Aydin et al. and our proposed scheme. Since they demonstrated good performance, the evaluation was extended to the range of 1,000–10,000 users. For these numbers of users, the performance of our proposed scheme still appears to be satisfactory.}
	\label{fig:Total_scal}
\end{figure}

It is clearly evident that we have two studies Aydin et al. \cite{aydin} and our proposed method, which can be considered lightweight. Then, the Figure \ref{fig:Total_scal} represents the scalability of them as the number of users increases from 1,000 to 10,000. As the number of users increases, the execution time naturally increases since both methods exhibit an exponential trend; however, it should be noted that even with 10,000 users, the performance of our method remains highly efficient.

\newpage

To evaluate the practicality of our scheme under constrained computational environments, we conducted experiments on a Raspberry Pi 4 Model B. This device features 8~GB of RAM and a 1.5~GHz Quad-Core 64-bit ARM Cortex-A72 CPU. Notably, our implementation is strictly single-threaded, utilizing only one core for all operations. This design choice ensures that the performance metrics reflect realistic conditions encountered in IoT-class or embedded devices, which often operate under limited power and processing budgets.

Despite its modest specifications, the Raspberry Pi 4B provides sufficient computational capability to execute the scheme reliably. Its low power consumption approximately 3.0~W in idle state and 6.7~W under full CPU load further reinforces its suitability for energy-sensitive applications.

Figure~\ref{fig:rasp} illustrates the execution times observed when one of the group members operates on such a constrained device. The left-hand graph specifically captures performance metrics under single-core execution, demonstrating that the scheme remains efficient and practical even in low-power environments.

 This experiment was conducted to observe the individual user performance and $GM$ performance under constrained computational resources separately, especially since the workload increases proportionally with the number of users. Even on a resource-constrained device like the Raspberry Pi, the execution time of user remains around 20 ms for up to 1000 users, demonstrating the efficiency and scalability of the proposed scheme. In addition, each user consumed only 0.0032 MB of memory. Also, the execution time for the $GM$ increases exponentially with the number of users; however, it remains below 10 ms even for a group of 1,000 users, demonstrating the scheme’s high efficiency. Memory consumption, on the other hand, was not explicitly measured, as the $GM$ requires only a negligible amount of memory throughout the process. The right-hand graph illustrates the total memory consumption throughout the process, encompassing both the Lagrange interpolation and summation steps. The results show that overall memory usage increases almost linearly with the number of users, with only about a 5 MB difference between 100 and 1,000 users. This indicates that the scheme imposes a lightweight memory load, even as the number of users scales up.

Even when operated on a limited-power device, the execution time takes approximately 20 ms for up to 1,000 users, suggesting that the proposed scheme holds strong potential to satisfy the stringent low-latency demands of next-generation communication technologies, including those beyond 5G and 6G networks.

\begin{figure}[t!]
	\centering
	\includegraphics[width=15cm, height=6.5cm]{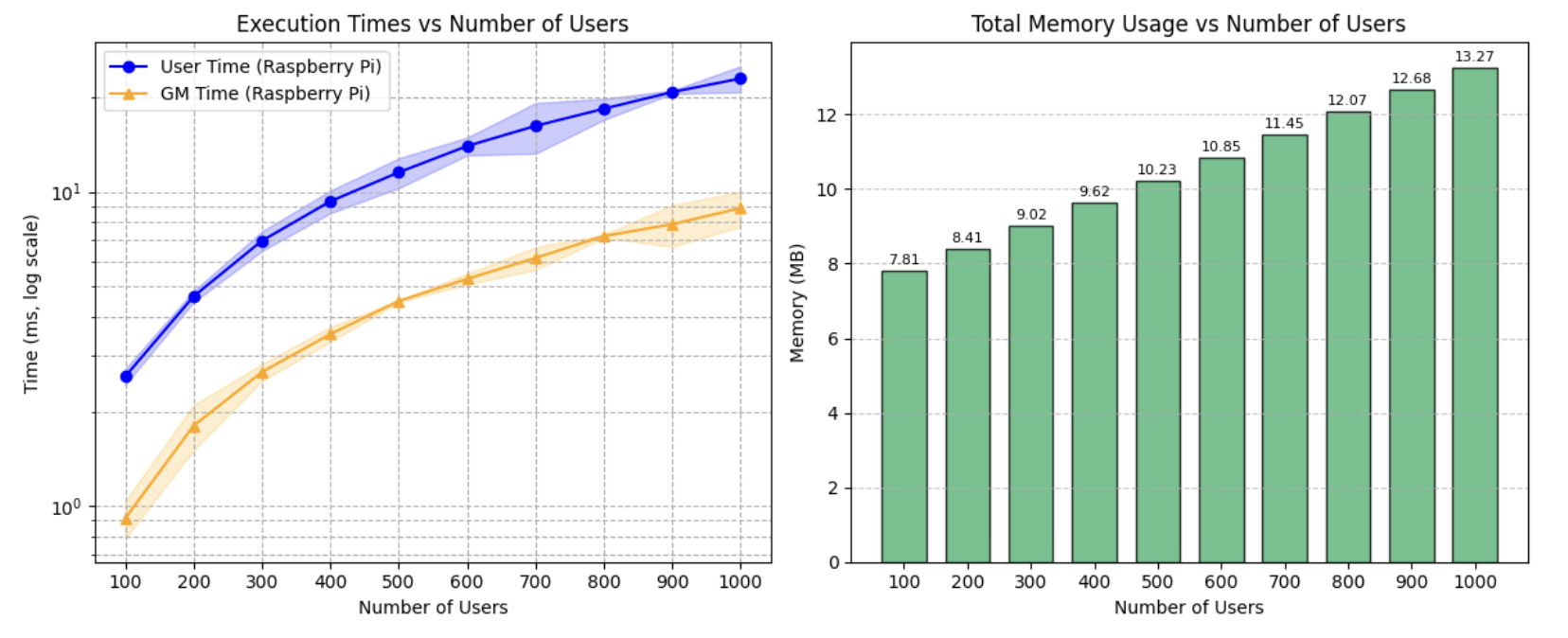}
	\caption{To evaluate the performance of the proposed scheme on limited-power devices, the implementation was tested on a Raspberry Pi. The left-hand graph illustrates the execution time spent by a single user during the authentication phase, while the right-hand graph presents the total memory consumption of the overall process.}
	\label{fig:rasp}
\end{figure}

The findings reinforce the notion that the proposed scheme not only optimizes the user’s individual workload but also contributes to faster overall group authentication, which is essential for scalability and reliability in real-world applications.

\section*{Conclusion}

\subsection*{Contribution}

In the first version of the linear group authentication scheme \cite{Guzey}, a malicious member can give a random basis to any party, allowing that entity to effortlessly join the group communication. The confidentiality of group conversation is provided via a scheme designed for constructing a shared group secret. This secret is then going to be the secret key for the encryption and decryption of the group members' messages. The group secret can easily be obtained by anyone having a basis for the subspace assigned to the group. Therefore, any member easily constructs a random basis by selecting random coefficients for its basis elements and then getting a linear combination of these vectors. Although the first version of the linear group authentication scheme successfully eliminated a major weakness of its predecessors, which were susceptible to DoS attacks, it still has limitations. The scheme can only verify group membership but lacks the capability to identify individual users participating in the communication. Additionally, the algorithm grants each user the ability to add non-members to the group. 

While addressing the vulnerability to DoS attacks has introduced significant flexibility for practical use, it also highlights the need for further improvements in user identification. However, it lacks the ability to identify users participating in the communication. This creates a potential risk where non-members, added to the group without the consent of others, could gain access to all data exchanged among group members. In this study, we enhance the inner product-based group authentication scheme by incorporating additional features that allow both user identification and membership verification. Since authentication and membership confirmation occur simultaneously, this approach is particularly suitable for large-scale environments. The capability to authenticate users joining the process effectively prevents unauthorized access from both within and outside the group.

\subsection*{Future Work}

The proposed work enhances the third-generation group authentication scheme to enable individual authentication. Future wireless communication systems are expected to incorporate space-based entities, implying that the authentication of thousands of users may need to be conducted simultaneously, with authentication results being relayed to others. One of our goal is to enhance the current scheme to handle handover in a practical way. The other one is to remove a central entity to make the method a suitable candidate for autonomous and intelligence systems.

As mentioned in the previous sections, the operation required from the $GM$ is quite lightweight; therefore, a powerful centralized system is not necessary. In this case, decentralized options can also be considered to avoid single-point-of-failure risks. Thus, multiple managers can be employed through blockchain-based group authentication, and instead of relying on a single $GM$ for verification, the process is carried out through consensus with the help of the blockchain.

\section*{Data Availability}
All data generated or analysed during this study are included in this published article.

\bibliography{sample}

@article{da2014internet,
  title={{Internet of Things in Industries: A Survey}},
  author={Da Xu, Li and He, Wu and Li, Shancang},
  journal={IEEE Transactions on Industrial Informatics},
  volume={10},
  number={4},
  pages={2233--2243},
  year={2014},
  publisher={IEEE}
}

@article{Harn,
  title={Group authentication},
  author={Harn, Lein},
  journal={IEEE Transactions on Computers},
  volume={62},
  number={9},
  pages={1893--1898},
  year={2012},
  publisher={IEEE}
  
}

@article{frustaci2017evaluating,
  title={Evaluating critical security issues of the {IoT} world: {Present} and future challenges},
  author={Frustaci, Mario and Pace, Pasquale and Aloi, Gianluca and Fortino, Giancarlo},
  journal={IEEE Internet of Things Journal},
  volume={5},
  number={4},
  pages={2483--2495},
  year={2017},
  publisher={IEEE}
}

@article{chaudhary2018lattice,
  title={{Lattice-based public key cryptosystem for Internet of things environment: Challenges and solutions}},
  author={Chaudhary, Rajat and Aujla, Gagangeet Singh and Kumar, Neeraj and Zeadally, Sherali},
  journal={IEEE Internet of Things Journal},
  volume={6},
  number={3},
  pages={4897--4909},
  year={2018},
  publisher={IEEE}
}

@article{chettri2019comprehensive,
  title={{A comprehensive survey on Internet of Things (IoT) toward {5G} wireless systems}},
  author={Chettri, Lalit and Bera, Rabindranath},
  journal={IEEE Internet of Things Journal},
  volume={7},
  number={1},
  pages={16--32},
  year={2019},
  publisher={IEEE}
}

@article{Shamir,
  title={How to share a secret},
  author={Shamir, Adi},
  journal={Communications of the ACM},
  volume={22},
  number={11},
  pages={612--613},
  year={1979},
  publisher={ACm New York, NY, USA}
}

@inproceedings{Blakley,
  title={Safeguarding cryptographic keys},
  author={Blakley, George Robert},
  booktitle={International Workshop on Managing Requirements Knowledge {(MARK)}},
  pages={313--318},
  year={1979},
}

@article{aman2020privacy,
  title={A privacy-preserving and scalable authentication protocol for the {Internet} of vehicles},
  author={Aman, Muhammad Naveed and Javaid, Uzair and Sikdar, Biplab},
  journal={IEEE Internet of Things Journal},
  volume={8},
  number={2},
  pages={1123--1139},
  year={2020},
  publisher={IEEE}
}

@inproceedings{douceur2002sybil,
	title={The sybil attack},
	author={Douceur, John R},
	booktitle={International Workshop on Peer-to-Peer Systems},
	pages={251--260},
	year={2002},
}

@article{nguyen20216g,
  title={{6G Internet of Things: A comprehensive survey}},
  author={Nguyen, Dinh C and Ding, Ming and Pathirana, Pubudu N and Seneviratne, Aruna and Li, Jun and Niyato, Dusit and Dobre, Octavia and Poor, H Vincent},
  journal={IEEE Internet of Things Journal},
  volume={9},
  number={1},
  pages={359--383},
  year={2021},
  publisher={IEEE}
}

@inproceedings{Zhang,
  title={An efficient identity-based batch verification scheme for vehicular sensor networks},
  author={Zhang, Chenxi and Lu, Rongxing and Lin, Xiaodong and Ho, P-H and Shen, Xuemin},
  booktitle={IEEE INFOCOM },
  pages={246--250},
  year={2008}
 
}

@article{hoffmann1989iterative,
  title={Iterative algorithms for {Gram-Schmidt} orthogonalization},
  author={Hoffmann, Walter},
  journal={Computing},
  volume={41},
  pages={335--348},
  year={1989},
  publisher={Springer}
}

@article{zhang2019group,
  title={A group signature and authentication scheme for blockchain-based mobile-edge computing},
  author={Zhang, Shijie and Lee, Jong-Hyouk},
  journal={IEEE Internet of Things Journal},
  volume={7},
  number={5},
  pages={4557--4565},
  year={2019},
  publisher={IEEE}
}

@article{Ren,
  title={{A novel PUF-based group authentication and data transmission scheme for {NB-IoT} in 3GPP 5G networks}},
  author={Ren, Xiongpeng and Cao, Jin and Ma, Maode and Li, Hui and Zhang, Yinghui},
  journal={IEEE Internet of Things Journal},
  volume={9},
  number={5},
  pages={3642--3656},
  year={2021},
  publisher={IEEE}
}

@inproceedings{Singh,
  title={{ALPAK: Anonymized Lightweight PUF-based group Authentication with Key distribution for {IoT} Devices}},
  author={Singh, Konjengbam Roshan and Sethia, Divyashikha},
  booktitle={International Conference on Computing Communication and Networking Technologies (ICCCNT)},
  pages={1--7},
  year={2023},
}

@article{gope2019lightweight,
  title={Lightweight and physically secure anonymous mutual authentication protocol for real-time data access in industrial wireless sensor networks},
  author={Gope, Prosanta and Das, Ashok Kumar and Kumar, Neeraj and Cheng, Yongqiang},
  journal={IEEE Transactions on Industrial Informatics},
  volume={15},
  number={9},
  pages={4957--4968},
  year={2019},
  publisher={IEEE}
}

@inproceedings{Mahalle,
  title={Threshold cryptography-based group authentication ({TCGA}) scheme for the {Internet} of Things ({IoT})},
  author={Mahalle, Parikshit N and Prasad, Neeli Rashmi and Prasad, Ramjee},
  booktitle={International Conference on Wireless Communications, Vehicular Technology, Information Theory and Aerospace \& Electronic Systems (VITAE)},
  pages={1--5},
  year={2014},
}

@article{Li,
  title={{Group-based authentication and key agreement with dynamic policy updating for MTC in LTE-A networks}},
  author={Li, Jinguo and Wen, Mi and Zhang, Tao},
  journal={IEEE Internet of Things Journal},
  volume={3},
  number={3},
  pages={408--417},
  year={2015},
  publisher={IEEE}
}

@inproceedings{chaum1991group,
	title={Group signatures},
	author={Chaum, David and Van Heyst, Eug{\`e}ne},
	booktitle={Advances in Cryptology—EUROCRYPT’91: Workshop on the Theory and Application of Cryptographic Techniques},
	pages={257--265},
	year={1991},
}

@BOOK{Menezes,
  TITLE = {Handbook of Applied Cryptography},
  AUTHOR = {Alfred, Menezes and Scott, Vanstone and others},
  YEAR = {1997}, 
  PUBLISHER = {CRC press},
}

@article{aydin,
	title={A flexible and lightweight group authentication scheme},
	author={Aydin, Yucel and Karabulut Kurt, Gunes and Ozdemir, Enver and Yanikomeroglu, Halim},
	journal={IEEE Internet of Things Journal},
	volume={7},
	number={10},
	pages={10277--10287},
	year={2020},
	publisher={IEEE}
}

@article{Guzey,
  author={Guzey, Sueda and Kurt, Gunes Karabulut and Ozdemir, Enver},
  journal={IEEE Internet of Things Journal}, 
  title={Group Authentication and Key Establishment Scheme}, 
  year={2024},
  volume={11},
  number={21},
  pages={35086-35099},
  keywords={Authentication;Polynomials;Interpolation;Internet of Things;Costs;Elliptic curves;Galois fields;Elliptic curve cryptography;group authentication;inner product;secret-sharing schemes;vector spaces}

}

@inproceedings{semal2018certificateless,
  title={A certificateless group authenticated key agreement protocol for secure communication in untrusted UAV networks},
  author={Semal, Benjamin and Markantonakis, Konstantinos and Akram, Raja Naeem},
  booktitle={2018 IEEE/AIAA 37th digital avionics systems conference (DASC)},
  pages={1--8},
  year={2018},
  organization={IEEE}
}

@article{zhang2019pa,
  title={PA-CRT: Chinese remainder theorem based conditional privacy-preserving authentication scheme in vehicular ad-hoc networks},
  author={Zhang, Jing and Cui, Jie and Zhong, Hong and Chen, Zhili and Liu, Lu},
  journal={IEEE Transactions on Dependable and Secure Computing},
  volume={18},
  number={2},
  pages={722--735},
  year={2019},
  publisher={IEEE}
}

@inproceedings{brzuska2011composability,
  title={Composability of Bellare-Rogaway key exchange protocols},
  author={Brzuska, Christina and Fischlin, Marc and Warinschi, Bogdan and Williams, Stephen C},
  booktitle={Proceedings of the 18th ACM conference on Computer and communications security},
  pages={51--62},
  year={2011}
}

\section*{Acknowledgements}

This work was supported by Istanbul Technical University under the Graduate Thesis Project – Doctoral Thesis Project (Project No: MDK-2025-47377).

\section*{Author contributions statement}

The theoretical layout was developed by E.O. and O.G. The implementation and analysis were conducted by O.G. and G.K-K, while the security analysis was handled by E.O. and O.G.

\end{document}